\newlength{\figurewidth}
\newlength{\figureheight}
\newtheorem{proposition}{Proposition}[section]
\newtheorem{theorem}{Theorem}[section]
\newtheorem{lemma}{Lemma}[section]
\newtheorem{corollary}{Corollary}[section]
\newtheorem{remark}{Remark}[section]
\newtheorem{assumption}{Assumption}
\begin{document} 
	\setcounter{page}{1}
	
	\title{Optimal auction duration:\\
	A price formation viewpoint}

	\author{
		Paul Jusselin\footnote{paul.jusselin@polytechnique.edu},~ Thibaut Mastrolia\footnote{thibaut.mastrolia@polytechnique.edu}~~and Mathieu Rosenbaum\footnote{mathieu.rosenbaum@polytechnique.edu} \\
		\'Ecole Polytechnique, CMAP
		}

	\maketitle
	\begin{center}
	\textbf{\vspace{2em}Version with supplementary material}
	\end{center}
		\begin{abstract}
\noindent	We consider an auction market in which market makers fill the order book during a given time period while some other investors send market orders. We define the clearing price of the auction as the price maximizing the exchanged volume at the clearing time according to the supply and demand of each market participants. Then we derive in a semi-explicit form the error made between this clearing price and the efficient price as a function of the auction duration. We study the impact of the behavior of market takers on this error. To do so we consider the case of naive market takers and that of rational market takers playing a Nash equilibrium to minimize their transaction costs.  We compute the optimal duration of the auctions for 77 stocks traded on Euronext and compare the quality of price formation process under this optimal value to the case of a continuous limit order book. Continuous limit order books are found to be usually sub-optimal. However, in term of our metric, they only moderately impair the quality of price formation process. Order of magnitude of optimal auction durations is from 2 to 10 minutes.
\end{abstract}

\noindent \textbf{Keywords: }Microstructure, market design, auctions, limit order books, continuous trading, market making, Nash equilibrium, BSDEs.

\section{Introduction}
	
In most historical (lit) markets, trading operates through a continuous-time	double auction system: the continuous limit order book (CLOB). This mechanism allows market participants to buy or sell shares at any time point at the quoted prices. However market orders systematically pay (at least) the spread as transaction cost. Moreover volumes impact prices as market makers readjust their positions in reaction to the order flow, resulting in additional trading costs. Beyond this, it has been argued that some mechanical flaws are inherent to CLOBs, particularly in the context of high frequency trading. The debate started in the academic literature notably with the very influential paper \cite{budish2015high}, see also \cite{farmer2012review,wah2013latency}. In this work, the authors explain that CLOBs lead to obvious mechanical arbitrage and generate a competition in speed rather than in price between high frequency market makers, to the detriment of final investors. They convincingly show that frequent batch auctions could be a way to remedy these flaws.\\

\noindent The idea that auctions could be a suitable mechanism for the functioning of financial markets is not new. For example, in the important paper \cite{madhavan1992trading}, see also \cite{garbade1979structural}, the interest of auctions compared to CLOBs is already investigated. Of course the discussion in this work is not about high frequency arbitrage opportunities, but rather on the fact that auctions could be beneficial for the price formation process by enabling investors to trade directly between each others, avoiding to pay spread costs to market makers.\\

\noindent In \cite{budish2015high}, the authors provide the order of magnitude of a lower bound for auction period leading to elimination of the high frequency flaws of CLOBs (about 100 milliseconds). However, the mentioned earlier literature suggests that longer auction times could be suitable, but usually without giving figures. This is why, quoting \cite{budish2015high}, \textit{ developing a richer understanding of the costs of lengthening the time between auctions is an important topic}. This is exactly what we do in this paper. We provide a sound and operational quantitative analysis of the optimal auction duration on a financial market, and compare the efficiency of this mechanism with that of a CLOB. We work with a criterion based on quality of the price formation process as in \cite{madhavan1992trading}, but in the context of modern high frequency markets as in \cite{budish2015high}. Thus we hope to bridge the gap between these two seminal papers.\\

\noindent Actually, there seems to be a growing interest in practice for trading outside standard CLOBs. For example, a very important fraction of trading activity is still made over the counter and a rising part of market participants turns to new forms of market structures such as dark pools or auctions. Some auctions are already organized regularly in many markets where the main mechanism is a CLOB, typically at the beginning and at the end of the trading day. There also exist auction markets where auctions take place one after the other all along the day, and without continuous trading phase between two auctions. During an auction, market participants can send and cancel limit or market orders. Then at a certain time (possibly random), a clearing price is fixed in order to maximize the exchanged volume and matched orders are executed at this price. This is for example the case in the BATS-Cboe periodic auctions market for European equities. In this market, auctions are triggered when a first order is sent (limit or market). Then settlement takes place at a random time such that the auction cannot last more than a pre-fixed duration (100 milliseconds)\footnote{https://markets.cboe.com/europe/equities/trading/periodic\_auctions\_book}.\\
	
\noindent In an auction context, the key issue for a regulator or an exchange is to set a proper time period for the auction, and to compare the relevance of this mechanism with that of a CLOB. In \cite{du2017optimal}, the authors study the efficiency of an auction market with respect to the duration of the auction. They propose a microscopic agent-based model with deterministic or stochastic arrival of private and public information. Agents optimize their demand schedules with respect to their information and some personal characteristics. The average utility over all agents is used as a criterion to prove that the optimal auction duration is related to the law of exogenous information arrival. The authors also study the impact of heterogenous speeds of agents. They show that fast agents prefer short auction durations while slow ones prefer long ones. However, in the case of heterogenous agents, they do not give any results on what the optimal auction duration should be.\\

\noindent Most other works on this topic use a price formation point of view to assess the quality of the specification of an auction. This is the case in \cite{garbade1979structural} where the authors propose a simple model for price formation in an auction market. The average squared difference between an efficient price and the clearing price is used as a metric to show that a positive optimal auction duration always exists. The suggested optimal duration is a trade-off between averaging effect (a long duration allows a large number of agents to take part in the auction, hence reducing uncertainty about the efficient price) and volatility risk (a short duration leads to small volatility risk). This model has been refined in \cite{fricke2018too}. In this paper, the authors investigate several generalizations of this framework such as the multi-assets case or the presence of a market maker using filtering techniques and observing correlated assets to infer the efficient price at the clearing time. \\

In our work, the same driving forces as in \cite{fricke2018too,garbade1979structural} will be key to define our optimal durations: averaging effect \textit{versus} volatility risk. However, an important limitation in \cite{fricke2018too,garbade1979structural} is that no market orders are considered so that all the agents can be seen as liquidity providers. It is necessary to relax this assumption since one observes a large part of market orders in the trading flows of actual auctions, see \cite{boussetta2017role}. For example, market participants having a marked to market benchmark or seeking for priority in execution may typically use market orders. This type of orders will have a crucial role when computing optimal auction durations. This is because long durations induce large variance in the imbalance of the market order flow leading to less accurate fixing prices. \\

\noindent Another important remark is that in \cite{du2017optimal,fricke2018too,garbade1979structural}, no comparison between the auction and CLOB markets can be made. This is because the CLOB structure is not included in the range of their models. They obtain optimal durations for auction markets, but cannot say wether CLOB markets are sub-optimal or not. In our modeling, CLOBs exactly correspond to auctions with duration equal to zero, making the comparison between auctions and CLOBs possible.\\
	
\noindent In this paper, inspired by the cited earlier literature, we take price discovery as our criterion to compute an optimal auction duration. Our approach extends in several directions those in \cite{fricke2018too,garbade1979structural,madhavan1992trading} and goes as follows. We consider a regenerative auction market with auctions starting when a market order is sent and with constant duration $h$. More precisely, we assume that after the $(i-1)-$th auction clearing (ended at time $\tau^{cl}_{i-1}$) the limit order book is emptied and a new market phase starts independently of the past. A new auction opens at time $\tau^{op}_{i}$ when a first market order is sent. This new auction ends at time $\tau^{cl}_{i } = \tau^{op}_{i} + h$. Our model encompasses both CLOB and auction market structures since CLOB corresponds to an auction with duration $0$ (because auctions are triggered by the arrival of a market order, as in several actual markets, and we assume that the LOB is never empty at the auction clearing).\\

\noindent In CLOB markets, there is competition between market makers optimizing their quotes and market takers search for suitable execution times. In auction markets, market takers have an additional possibility to access cheap liquidity: they can try to match their orders with other market orders sent in the opposite direction. For example if a large volume of buy market orders is sent before the auction clearing, it is a good opportunity for selling market takers to execute their orders at a good price. In this context, a new form of competition between buying and selling market takers may arise, with market makers playing a side role. We also investigate this situation where market takers are strategic and act optimally in order to reduce their trading costs. We notably show that there exists a Nash equilibrium for this game. In this framework, we can compute the function $E$ too, and thus find an optimal auction duration.\\

\noindent From a mathematical point of view, the existence of a Nash equilibrium boils down to solving a fully coupled multi-dimensional BSDE driven by counting processes with non-positive discontinuous generator. The existence of Nash equilibria associated to a system of BSDEs with discontinuous generator has been notably studied in \cite{hamadene2014bang} in a Brownian framework. BSDEs related to those in \cite{hamadene2014bang} have been essentially investigated in the one-dimensional Brownian case (see for instance \cite{jia2008class,fan2012one,duan2013bsdes,tian2013lp}) considering a semi-continuous generator with respect to the $Y$ process and Lipschitz continuous with respect to $Z$. Existence of solutions to these BSDEs can usually be obtained which is not the case for uniqueness, see for instance \cite[Remark 4]{jia2008class}. An extension to the multidimensional case (still considering Lipschitz continuous generators with respect to $Z$) is proposed in \cite{fan2018existence} together with a uniqueness result. In \cite{heyne2014minimal} the authors succeed in weakening the continuity assumption with respect to the $Z$ component and prove the existence and uniqueness of a minimal solution in the one-dimensional case under positive generator or a relaxed condition, see \cite[Section 3.3]{heyne2014minimal}. Up to our knowledge our existence result is new and extends \cite{hamadene2014bang} to the case of counting processes. As explained above the question of uniqueness is particularly intricate and out of the scope of this paper.\\
	
\noindent Finally, based on Euronext exchange data, we use our model to compute the optimal auction duration according to our price discovery criterion for 77 European stocks traded on Euronext. The first striking result is that the suggested durations are much larger than a few milliseconds, rather of order of $1$ to $5$ minutes. The second one is that in term of our metric, CLOB are indeed sub-optimal. However, the quality of the price formation process in CLOB market is not very far from that of the auction with optimal duration. Of course this work is only a first step towards a full analysis of the auction issue since we focus here on one specific (but crucial) criterion. Other aspects such as liquidity, tick size effects and fees or potential arbitrage opportunities should certainly be addressed in future works. We also neglect potential optimization of market makers strategies who could for example revise their quotes during the auction according to the current market orders imbalance. Nevertheless, we believe our results are original and striking enough to help exchanges and financial authorities rethink their policies in terms of market design.\\
	
The paper is organized as follows. In Section \ref{sec:model} we describe the auction mechanism and our model. We also provide our first main result on the computation of the function $E$. The case of strategic market takers optimizing their trading cost is considered in Section \ref{sec:imbalance_process}. Our calibration methodology and numerical results on equity data can be found in Section \ref{sec:results}. Section 5 provides financial insights. Proofs are relegated to an Appendix.

\section{The model}
\label{sec:model}

In this section, we introduce our model for auction market. We describe the organization of the market and the behavior of the two types of agents: market makers filling the limit order book (LOB) with limit orders and market takers sending market orders. Then we explain the clearing rule and compute the clearing price. Finally we provide a semi-explicit expression for the quality of the price formation process.

\subsection{Auction market design}
\label{subsec:organization_market}

We consider an auction market organized in independent sequential auctions triggered by market orders. More precisely, after the opening of the market or after the clearing of an auction, a new auction starts when a first market order is sent. We write $(\tau^{op}_i)_{i\in \mathbb{N}^*}$ for the sequence of opening times of the auctions and $(\tau^{cl}_i)_{i\in \mathbb{N}}$, with $\tau^{cl}_0=0$, for the sequence of clearing times. An auction has a duration of $h$ seconds and allows market takers to meet. When $h=0$, our model corresponds to a CLOB market since any market order is matched against the limit orders present in the LOB.\\

\noindent For a given auction starting at some time $\tau_i^{op}$, market participants can send market or limit orders. At the auction clearing time $\tau_i^{cl}=\tau_{i}^{op} + h$, a clearing price, denoted by $P^{cl}_{\tau_i^{cl}}$, is set to maximize the exchanged volume. More precisely, sellers who are willing to sell below the price $P^{cl}_{\tau_i^{cl}}$ sell their shares to buyers who are willing to buy above $P^{cl}_{\tau_i^{cl}}$. Each cleared share is sold at the clearing price, independently of the posted price of the associated limit order if any (to the benefit of participants sending limit orders).

\subsection{Market makers and market takers}
\label{subsec:market_makers_takers}
Along the day, market makers arrive randomly in the market and send limit orders to fill the LOB. During the $i-$th market phase market makers arrival times are given by $(\tau_{i-1}^{cl}+ \tau^{i, mm}_k)_{k \geq 0}$ where $\tau^{i,mm}_k$ is the $k-$th event time of a counting process $(N^{i, mm}_s)_{s\geq 0 }$. We describe the liquidity provided by the $k-$th market maker by its supply function $S_k$.  This supply function depends on the market maker's view on the efficient price  at the clearing time when he sends his limit orders. The efficient price process is $(P_{s})_{s\geq 0 }$ and can be seen as the average of market makers' opinions at a given time on the ``fair" value of the underlying asset. It satisfies $P_s=P_0+\sigma_f W_s$ with $W$ a Brownian motion and $\sigma_f$ a positive constant. When positive, the quantity $S_k(p)$ represents the number of shares the $k-$th market maker is willing to sell at price $p$ or above. Negative values correspond to shares he is willing to buy at price $p$ or below. We assume that
	$$
	S_k(p) = K(p-\tilde{P}_{k}), \text{ with } \tilde{P}_{k} =  \mathbb{E}[P_{\tau^{cl}_i}|\mathcal{F}_{\tau^{cl}_{i-1}+\tau^{i,mm}_{k}}]   + g_k,
	$$ 
where $\tilde{P}_k$ is the view on the price of the asset by the $k-$th market maker when he sends his orders and $K$ a positive constant, $(g_k)_{k> 0}$ a sequence of i.i.d random variables with variance $\sigma^2$ representing the estimation noise in the inference of the efficient price by the market maker independent of all other processes. Linear supply functions are also considered in \cite{du2017optimal,fricke2018too}. This is equivalent to assume that each market maker sends uniform limit sell order above price $\tilde{P}_k$  and uniform limit buy orders to price level below $\tilde{P}_k$. In this case we get
$$
\tilde{P}_k  = P_{\tau_{i-1}^{cl} + \tau^{i, mm}_k}  + g_k. 
$$

\noindent  In practice there are different kinds of market makers and we could have assumed that each market maker has its own noise. That said, there are typically not so many market makers in the market and since they basically have the same technology, it is reasonable to assume that they have the same noise parameter. Note also that a model with different variance parameters would be very hard to calibrate because of the anonymity of our data.
\\

\noindent Consequently, market makers inject information in the LOB since they reveal the knowledge they have on the price through their supply function. However, the longer the auction duration, the less reliable the views of market makers arrived early in term of the estimation of the efficient price $P$ at the clearing time\footnote{To partially address this issue we extend our model allowing market makers to revise their position by canceling their orders in Appendix \ref{appendix:cancellation}.}\footnote{Note that a possible extension would be to consider that $\tilde{P}_k$ also depends on recently observed clearing prices, see \cite{fricke2018too}.}. Finally to obtain a regenerative market we consider that after the auction clearing time $\tau^{cl}_i$ market makers cancel their unmatched limit orders. Since sequential auctions markets with sufficiently large durations do not really exist, it is hard to have an idea of what would be the market maker's behavior. Of course, total cancellation after the clearing is not so realistic. However, note that the times of interest of our analysis are the auctions closing times, where the model is very reasonable. For example, when $h=0$, which corresponds to a CLOB market, at each time a market order is sent, the LOB is already filled thanks to Assumption 2 below. By setting $\Delta_i = \tau^{cl}_i-\tau^{cl}_{i-1}$ we deduce that at the clearing there is $N^{i, mm}_{\Delta_i}$ market makers in the LOB.\\

\noindent During the $i-$th market phase the arrival time of the $k-th$ buy (resp. sell) market order is given by $\tau^{cl}_{i-1}+\tau^{i,a}_k$ (resp. $\tau^{i,b}_k$) where $ \tau^{i,a}_k$ (resp. $\tau^{i, b}_k$) is the $k-$th event time of the counting process $(N^{i, a}_s)_{s\geq 0}$ (resp. $(N^{i,b}_s)_{s\geq 0}$). Consequently the opening time of the $i-$th auction is $\tau^{op}_i = \tau^{cl}_{i-1}+\tau^{i,a}_1 \wedge \tau^{i,b}_1$. We suppose that each market taker sends market orders with constant volume $v$. Moreover we assume that $(N^{i, mm}, N^{i,a}, N^{i,b})$ is independent of the efficient price $P$. We define $I^i$ as the cumulated imbalance of the market takers: $I^i_t = vN^{i,a}_{t} - vN^{i,b}_{t}$. The aggregated demand of the market takers at the clearing of the $i-$th auction is thus given by $I^i_{\Delta_i}$. \\

\noindent We now make the following natural assumption, which states that market is regenerative.
\begin{assumption}
\label{assumption:order_flow}
The market dynamics satisfy:
\begin{itemize}
\item[i)]After each auction clearing the market regenerates:  the processes $(N^{i, mm}, N^{i, a}, N^{i, b}, I^i)_{i\geq 0}$ are independent and identically distributed.
\item[ii)]The random variables $(\tau^{i,a}_1 \wedge \tau^{i, b}_1)_{i\geq 0}$ are i.i.d. with exponential law with parameter $\nu$.
\item[iii)]The random variables $N^{1,a}_{\tau^{cl}_1}$ and $N^{1, b}_{\tau^{cl}_1}$ are squared integrable.
\end{itemize}
\end{assumption} 

Points $i$) and $ii$) of Assumption \ref{assumption:order_flow} imply that market order flow is basically a Poisson process, which is the most standard dynamic used in the microstructure literature, see \cite{avellaneda2008high,gueant2017optimal}. This assumption is not perfectly realistic, in particular it does not enable us to reproduce the long memory property of market order flow, see for example \cite{bouchaud2009markets}. A possible way to relax this assumption would be to consider Hawkes-type intensities. However this would make the model much more intricate in terms of computation and calibration. Point $iii$) is a classical technical assumption.\\

Note that Points $i$) and $ii$) of Assumption \ref{assumption:order_flow} mean that for any $i\geq 0$, $\tau^{op}_{i+1} - \tau^{cl}_i$ follows an exponential random variable with parameter $\nu$. We also consider $(N^{mm}, N^a, N^b, I)$ a random variable with the law of $(N^{1,mm}, N^{1,a}, N^{1, b}, I^1)$. This will be useful to lighten some notations.\\

In practice it is very unlikely that a market taker sends a market order if there is no liquidity in the LOB and a situation with empty LOB is very unrealistic. A way to adapt the non empty LOB assumption setting is to consider that the first market maker always arrives before the auction clearing occurs. It means that almost surely for any $i$ we have $\tau^{i, mm}_1<(\tau^{i,a}_1\wedge\tau^{i,b}_1)+h$. Hence we consider the following assumption
\begin{assumption}
\label{assumption:market_maker}
Let $\mu>0$. The density of $\big( \tau^{1, mm}_1, (\tau^{1,a}_1\wedge\tau^{1,b}_1)\big)$ at point $(s, t)\in \mathbb{R}^2$ is given by
\begin{equation*}
\mathbf{1}_{0\leq s\leq t+h}\frac{\mu e^{-\mu s}}{1- e^{-\mu (t+h)}}\mathrm{d}s~  \nu e^{-\nu t}\mathbf{1}_{t\geq 0}\mathrm{d}t.
\end{equation*}
Finally we assume that $(N^{1, mm}_{s+\tau^{1, mm}_1} - 1)_{0 \leq s\leq h}$ is a Poisson process with intensity $\mu$ that is independent of $P$ and $(N^{1, a}_s, N^{1, b}_s)_{s\geq \tau ^{1, mm}_1}$.
\end{assumption}
Assumption \ref{assumption:market_maker} means that $(N^{1, mm}_s)_{0 \leq s\leq \tau_1^{op}+h }$ has the law of a Poisson process with intensity $\mu$ conditional on the fact that its first event occurs before time $\tau^{cl}_1$. This assumption\footnote{An alternative idea leading to a very different approach would be to endogenize the market behavior of market makers, see \cite{du2017optimal}} also allows to obtain a non-degenerate CLOB at the limit $h=0$.

\subsection{Clearing rule}
\label{subsubsec:clearing_rule}
We now explain how the clearing price is settled at the end of an auction. At time $\tau_{i}^{cl} = \tau^{op}_i+h$, a clearing price $P^{cl}_{\tau^{cl}_i}$ is set in order to maximize the exchanged volume. This clearing rule is used in most of electronic markets for the opening and clearing auctions. This is also the rule considered in the academic literature (see for instance \cite{du2017optimal}). We denote by $F^{-}(p)$ (resp. $F^{+}(p)$) the total number of shares that buyers (resp. sellers) are willing to buy (resp. sell) at price $p$. The function $F^{-}$ (resp. $F^{+}$) is decreasing (resp. increasing). Assume that a clearing price $P^{cl}$ is set. The total volume exchanged is then $F^{-}(P^{cl})\wedge F^{+}(P^{cl})$. Now suppose that $F^{-}$ and $F^{+}$ are continuous at point $P^{cl}$ and $F^{-}(P^{cl})<F^{+}(P^{cl})$. If there is still remaining liquidity on the bid side of the book (formally if $F^-$ is not bounded by $F^{-}(P^{cl})$), the exchanged volume is not optimal as it may be improved by decreasing the price. Conversely, assume that $F^{-}(P^{cl})>F^{+}(P^{cl})$ and if there is liquidity on the ask side (formally, if $F^{+}$ is not bounded by $F^{+}(P^{cl})$), then the exchanged volume is not optimal as it may be improved by increasing the price. Thus, when such equality is possible and in order to maximize volume at the clearing time, the optimal clearing price has to satisfied 
\begin{equation}\label{eq:clearingprice}
F^{-}(P^{cl})-F^{+}(P^{cl}) = 0.
\end{equation}
Note that the value $F^-(+\infty)$ (resp. $F^+(-\infty)$) is the number of shares to be bought (resp. to be sold) at any price.\\

\noindent The function $F^{-} - F^{+}$ is the algebraic supply-demand function of all market participants together (market makers and market takers). Thus we have obtained that the clearing price is a zero of the aggregated supply-demand of the agents. Consequently, in our framework, the clearing price $P^{cl}_{\tau^{cl}_i}$ of the $i-th$ auction, defined as a solution of \eqref{eq:clearingprice}, can be found solving the following equation:

$$
\sum_{k =  1 }^{ N^{i, mm}_{\Delta_i}  } S_k(p) - I^i_{\Delta_i} = 0.
$$
The $i-$th clearing price is then given by
\begin{equation}
\label{eq:clearing_price}
P^{cl}_{\tau^{cl}_i} = \frac{1}{N^{i,mm}_{\Delta_i}}\sum_{k = 1 }^{N^{i,mm}_{\Delta_{i}}} \tilde{P}_{k} + \frac{1}{K}\frac{I^i_{\Delta_{i}} }{ N_{\Delta_{i}}^{i,mm} }.
\end{equation}
Finally, we define the mid-price $P^{mid}$ of the LOB as the obtained clearing price without taking into account market orders in the auction clearing:
\begin{equation}
\label{midprice}
P^{mid}_{\tau^{cl}_i}  = \frac{1}{N^{i,mm}_{\Delta_i}}\sum_{k = 1 }^{ N^{i,mm}_{\Delta_i}  }\tilde{P}_k.
\end{equation}

\subsection{A metric for the quality of the price formation process}
\label{sec:compare_market}

One of the main roles of a financial market is to reveal with accuracy the price of the underlying asset, guaranteeing fair transaction prices to market participants. In our framework, this is equivalent to have a clearing price close to the efficient price. Therefore a natural criterion to compare different auction durations is to assess, with respect to the auctions duration, the accumulated error between the efficient price and the clearing prices over the day. To do so, we consider the following time-weighted quadratic error:
\begin{equation}\label{eq:zh}
Z^h_t = \sum_{i = 1}^{N^{cl}_t - 1} \Delta_{i+1}(P_{\tau^{cl}_i}-P^{cl}_{\tau^{cl}_i})^2 + (t - \tau^{cl}_{N^{cl}_t})(P_{\tau^{cl}_{N^{cl}_t}}-P^{cl}_{\tau^{cl}_{N^{cl}_t}})^2,
\end{equation} where $N^{cl}_t$ denotes the number of auctions cleared before time $t$. Thus, for each auction, we consider the quadratic deviation between the clearing price and the efficient price and weight this deviation by the time to wait until a new price is set. Note that \eqref{eq:zh} may be rewritten  
$$
Z^h_t = \int_0^t (\overline{P}^{cl}_s-\overline{P}_s)^2 \mathrm{d}s,
$$
where the processes $\overline{P}^{cl}_s$ and $\overline{P}_s$ are respectively the clearing and efficient price at the last auction clearing time before time $s$, that is
$$
(\overline{P}^{cl}_s, \overline{P}_s) = (P^{cl}_{ \tau^{cl}_i }, P_{\tau^{cl}_i}), \text{ where } i = {\sup}\{j\geq 1,\text{ s.t }\tau_j^{cl}\leq s \}.
$$

We define an auction duration $h^{*}$ as optimal if almost surely, $Z^{h^*}_t$ is asymptotically smaller than $Z^h_t$ for any $h\geq 0$. Using the fact that $\big( (\overline{P}^{cl}_s - \overline{P}_s)^2\big)_{s\geq 0}$ is a regenerative process we obtain, see Appendix \ref{appendix:regenerative}, the following important result for our asymptotic computations.
\begin{lemma}
\label{lemma:utility}
The following convergence holds almost surely:
$$
\underset{t \rightarrow +\infty }{\lim} \frac{Z^h_t}{t} =  \mathbb{E}[( P^{cl}_{\tau^{cl}_1} - P_{\tau^{cl}_1})^2].
$$
\end{lemma}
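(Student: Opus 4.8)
The plan is to invoke a renewal-reward (regenerative process) theorem for the process $Y_s := (\overline{P}^{cl}_s - \overline{P}_s)^2$, whose integral over $[0,t]$ is exactly $Z^h_t$. The regeneration epochs are the auction clearing times $\tau^{cl}_i$. First I would verify the regenerative structure: by Assumption \ref{assumption:order_flow} the tuples $(N^{i,mm}, N^{i,a}, N^{i,b}, I^i)_{i\ge 1}$ are i.i.d., and since the efficient price increments $P_{\tau^{cl}_i + \cdot} - P_{\tau^{cl}_i}$ are independent of the past by the Brownian structure and the independence assumptions, the post-$\tau^{cl}_i$ evolution of $Y$ is independent of the pre-$\tau^{cl}_i$ history and has a law that does not depend on $i$. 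Hence $Y$ is a (delayed, but here non-delayed since $\tau^{cl}_0 = 0$) regenerative process with cycle lengths $\Delta_{i+1} = \tau^{cl}_{i+1} - \tau^{cl}_i$.

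Next I would check the integrability hypotheses of the renewal-reward theorem. The cycle length $\Delta_1$ satisfies $\mathbb{E}[\Delta_1] < \infty$: indeed $\Delta_1 = (\tau^{1,a}_1 \wedge \tau^{1,b}_1) + h$, and by Assumption \ref{assumption:order_flow}(ii) the first term is exponential with parameter $\nu$, so $\mathbb{E}[\Delta_1] = 1/\nu + h \in (0,\infty)$. For the reward over a cycle, $R_1 := \int_{\tau^{cl}_0}^{\tau^{cl}_1} Y_s\,ds = \Delta_1 \cdot (P^{cl}_{\tau^{cl}_1} - P_{\tau^{cl}_1})^2$, I need $\mathbb{E}[|R_1|] < \infty$; this follows from the explicit formula \eqref{eq:clearing_price} together with the squared-integrability of $N^{1,a}_{\tau^{cl}_1}, N^{1,b}_{\tau^{cl}_1}$ in Assumption \ref{assumption:order_flow}(iii), the finite variance of the $g_k$ and of the Brownian increments, the Poisson control on $N^{mm}$ from Assumption \ref{assumption:market_maker}, and the fact that $N^{i,mm}_{\Delta_i} \ge 1$ almost surely so the denominators in \eqref{eq:clearing_price} are bounded below. (A slightly careful bound is needed since $\Delta_1$ and $(P^{cl}_{\tau^{cl}_1} - P_{\tau^{cl}_1})^2$ are not independent — $\Delta_1$ enters through the clearing time — but $\Delta_1 = 1/\nu$-scale plus the deterministic $h$, and conditioning on the counting processes makes the Brownian part Gaussian with variance at most $\sigma_f^2 \Delta_1$, giving a crude bound like $\mathbb{E}[\Delta_1 \cdot (\cdots)^2] \le C\,\mathbb{E}[\Delta_1(1+\Delta_1)] < \infty$.)

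With these two facts, the renewal-reward theorem (equivalently, the SLLN for regenerative processes, e.g. Asmussen or Smith's theorem) yields almost surely
$$
\frac{Z^h_t}{t} = \frac{1}{t}\int_0^t Y_s\,ds \xrightarrow[t\to\infty]{} \frac{\mathbb{E}[R_1]}{\mathbb{E}[\Delta_1]} = \frac{\mathbb{E}\big[\Delta_1 (P^{cl}_{\tau^{cl}_1} - P_{\tau^{cl}_1})^2\big]}{\mathbb{E}[\Delta_1]}.
$$
The final step is to simplify the numerator: I would argue that $(P^{cl}_{\tau^{cl}_1} - P_{\tau^{cl}_1})^2$ is independent of $\Delta_1$. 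This is the one point that requires genuine care. Conditionally on the realizations of the counting processes $(N^{mm}, N^a, N^b)$ — which determine $\Delta_1$ and the number of participants — the quantity $P^{cl}_{\tau^{cl}_1} - P_{\tau^{cl}_1}$ depends only on the $g_k$'s and on Brownian increments of the form $W_{\tau^{cl}_0 + \tau^{1,mm}_k} - W_{\tau^{cl}_1}$; by the scaling/stationarity of Brownian motion and the i.i.d. structure of the $g_k$, the conditional law of this error given $\Delta_1$ turns out not to depend on $\Delta_1$ (each Brownian increment entering the average has a variance determined by the spacings $\tau^{cl}_1 - (\tau^{cl}_0 + \tau^{1,mm}_k)$, whose joint law, after the conditioning structure built into Assumption \ref{assumption:market_maker}, is the relevant quantity). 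If full independence is too strong, one falls back on the factorization $\mathbb{E}[\Delta_1 \,U] = \mathbb{E}[\Delta_1]\,\mathbb{E}[U]$ which is all that is needed; in any case the excerpt's later computation of $E(h)$ is carried out as $\mathbb{E}[(P^{cl}_{\tau^{cl}_1} - P_{\tau^{cl}_1})^2]$, so I would present this factorization as the crux. Dividing numerator and denominator by $\mathbb{E}[\Delta_1]$ then gives the claimed limit $\mathbb{E}[(P^{cl}_{\tau^{cl}_1} - P_{\tau^{cl}_1})^2]$. The main obstacle is precisely this last independence/factorization argument; everything else is a routine verification of the hypotheses of a standard regenerative-process limit theorem.
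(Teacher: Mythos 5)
Your overall strategy---writing $Z^h_t=\int_0^t(\overline{P}^{cl}_s-\overline{P}_s)^2\,\mathrm{d}s$ and applying the strong law for regenerative processes with regeneration epochs $(\tau^{cl}_i)$---is exactly the paper's route (Asmussen, Theorem 3.1, Chapter VI). However, there is a genuine error in how you identify the reward of a cycle, and it turns the step you call the ``crux'' into a false statement rather than a delicate one. On the cycle $[\tau^{cl}_i,\tau^{cl}_{i+1})$ the process $(\overline{P}^{cl}_s-\overline{P}_s)^2$ is constant, equal to the error of the $i$-th clearing $(P^{cl}_{\tau^{cl}_i}-P_{\tau^{cl}_i})^2$, which is fixed at the \emph{beginning} of that cycle; the cycle reward is therefore $\Delta_{i+1}\,(P^{cl}_{\tau^{cl}_i}-P_{\tau^{cl}_i})^2$, i.e.\ the $i$-th error weighted by the length of the \emph{following} market phase (this is precisely how $Z^h_t$ is defined in \eqref{eq:zh}; note also that nothing accrues on $[0,\tau^{cl}_1)$, where no clearing price exists yet, so the process is regenerative only from $\tau^{cl}_1$ onward). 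You instead set $R_1=\int_{\tau^{cl}_0}^{\tau^{cl}_1}Y_s\,\mathrm{d}s=\Delta_1\,(P^{cl}_{\tau^{cl}_1}-P_{\tau^{cl}_1})^2$, pairing the first error with the duration $\Delta_1$ of its \emph{own} phase. That identity is wrong, and the factorization you then need, $\mathbb{E}[\Delta_1 U]=\mathbb{E}[\Delta_1]\,\mathbb{E}[U]$ with $U=(P^{cl}_{\tau^{cl}_1}-P_{\tau^{cl}_1})^2$, is also wrong: conditionally on the first phase lasting $t$, the expected squared error equals $g(t)+\sigma^2 f(t)+\ell(t)/K^2$, the integrand appearing in Theorem \ref{th:utility_function}, which genuinely depends on $t$ (more market makers, larger Brownian dispersion, larger imbalance variance as $t$ grows). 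So $\Delta_1$ and $U$ are dependent, and your suggested argument that ``the conditional law of the error given $\Delta_1$ does not depend on $\Delta_1$'' cannot be repaired; with your pairing the renewal-reward limit would be $\mathbb{E}[\Delta_1 U]/\mathbb{E}[\Delta_1]$, which differs from the claimed $\mathbb{E}[U]$.

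With the correct pairing the difficulty you flag simply disappears: $(P^{cl}_{\tau^{cl}_1}-P_{\tau^{cl}_1})^2$ is measurable with respect to the first-phase randomness (first-phase counting processes, the $g_k$'s and the Brownian path up to $\tau^{cl}_1$), while $\Delta_2=h+(\tau^{2,a}_1\wedge\tau^{2,b}_1)$ involves only the second-phase order flow, so Assumption \ref{assumption:order_flow} gives the independence immediately and $\mathbb{E}\big[\int_{\tau^{cl}_1}^{\tau^{cl}_2}X_s\,\mathrm{d}s\big]=\mathbb{E}[\Delta_2]\,\mathbb{E}[U]$, whence the limit $\mathbb{E}[U]$ as in the paper. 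Your verification of $\mathbb{E}[\Delta_1]=1/\nu+h<\infty$ and of the integrability of the cycle reward is sound and carries over verbatim once the reward is paired with $\Delta_{i+1}$.
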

In light of Lemma \ref{lemma:utility}, a duration $h^*$ is optimal if it is a minimizer of the function $E$ given by
$$
E(h) = \mathbb{E}[(P^{cl}_{\tau^{cl}_1 } - P_{\tau^{cl}_1 })^2].
$$
We also consider the efficiency of the mid-price defined in \eqref{midprice}, denoted by $E^{mid}$:
$$
E^{mid}(h) = \mathbb{E}[(P^{mid}_{\tau^{op}_1 + h} - P_{\tau^{op}_1 + h})^2].
$$
We now give our first main theorem. It provides a semi-explicit expression for the function $E$. Its proof is given in Appendix \ref{appendix:proof:utility_function}.
\begin{theorem}
\label{th:utility_function}
The quality of price formation process metric satisfies:
\begin{equation*}
E(h) = E^{mid}(h) + \frac{\mathbb{E}[I^2_{\tau^{cl}_1}]}{K^2}(1 - e^{-\mu h}\frac{\nu}{\nu + \mu})^{-1} e^{\nu h } \int_h^{+\infty}\!\!\!\!\nu e^{-\nu t }e^{-\mu t}\int_0^{\mu t} \frac{1}{s}\int_0^s \frac{e^u - 1}{u} \mathrm{d}u \mathrm{d}s \mathrm{d}t,
\end{equation*}
with $E^{mid}(h)$ given by
\begin{equation}\label{Emid}
(1 - e^{-\mu h}\frac{\nu}{\nu + \mu})^{-1} e^{\nu h }\int_{h}^{+\infty}\!\!\!\!\nu e^{-\nu t } \Big( (\sigma^2_{f}\frac{t}{6}+\sigma^2)e^{-\mu t}\int_{0}^{\mu t}\frac{e^s - 1}{s}\mathrm{d}s +\sigma_f^2 \frac{t}{3}(1 - e^{-\mu t})  \Big)\mathrm{d}t.
\end{equation}
\end{theorem}
\begin{remark} Note that we can simplify the double integrals by using the so-called Exponential Integral function $E_1:\mathbb R_+^*\longrightarrow \mathbb R_+$ defined by $E_1(x)=\int_x^{+\infty} \frac{e^{-u}}udu$. We thus get
\begin{align*}
E(h) = E^{mid}(h) +& \frac{\mathbb{E}[I^2_{\tau^{cl}_1}]}{K^2}(1 - e^{-\mu h}\frac{\nu}{\nu + \mu})^{-1} e^{\nu h } \Big( \int_h^{+\infty} \frac{e^{-\mu u -1}}{u}\frac{\nu}{\nu + \mu}E_1\big( (\nu + \mu)u\big)\mathrm{d}u\\
& + \int_0^h \frac{e^{-\mu u -1}}{u}\frac{\nu}{\nu + \mu}\big( \log(\frac{h}{u})e^{-(\nu + \mu)h} + E_1((\nu+\mu)h) \big) \mathrm{d}u\Big),
\end{align*}
where $E^{mid}$ is given by \eqref{Emid}.
\end{remark}

We remark from Theorem \ref{th:utility_function} that for given $h>0$, $E(h)>E^{mid}(h)$. This is quite intuitive: the presence of market orders induces here additional deviations of clearing prices which are not directly driven by information, rather by imbalance between supply and demand. Of course when $\mu=0$, we get $E(h)=E^{mid}(h)$. We also see that the price formation process is of higher quality when $K$ is large. In that case a large amount of liquidity is already present close to the efficient price, leading to better transaction prices. Finally note that a similar expression as the one in Theorem \ref{th:utility_function} can be obtained when we allow market makers to cancel their orders, see Appendix \ref{appendix:cancellation}.\\ 

\noindent If we have access to the quantity $\mathbb{E}[I_{\tau^{op}_1 + h}^2]$, which depends on the market takers behavior, Theorem \ref{th:utility_function} enables us to compute the function $E$ and therefore to find the optimal auction duration by minimizing $E$. We can for example consider the standard assumption that $N^a$ and $N^b$ are independent Poisson processes with intensity $\nu/2$ along the auction (this is consistent with Assumption \ref{assumption:order_flow}). In this case we get
$$\mathbb{E}[I_{\tau^{op}_1 + h}^2]= v^2(\nu h+1),$$ see Appendix \ref{appendix_calculnaiveMT}. Therefore the function $h\rightarrow E(h)$ of Theorem \ref{th:utility_function} becomes fully explicit and we can obtain numerically the optimal duration. We refer to Section \ref{sec:results} for numerical details, empirical results and statistical methodology to estimate the parameters appearing in the expression of $E(h)$.\\

\noindent The Poisson assumption for the market order flow is very classical and leads to easy computations and simple results. However, in an auction setting, market orders play a crucial role and one should also investigate the possibility of strategic placements, taking into account the auction environment. We deal with this situation in the next section.

\section{Strategic market takers}
\label{sec:imbalance_process}
In practice, market orders are sent through algorithms optimizing transaction times. So, in this section, we consider that market takers aim at minimizing their trading costs by adapting their trading intensities to the market state. We formalize this into a competitive game between buying and selling market takers. We show that this game admits a Nash equilibrium. Moreover, when market takers follow the strategies corresponding to this Nash equilibrium, we can compute the key quantity $\mathbb{E}[I_{\tau^{op}_{1} + h}^2]$ appearing in the expression of $E(h)$. Note that it would of course be interesting to also consider that market makers are also strategic alter their behaviors in response to changing duration of the auction, see \cite{budish2015high,du2017optimal}. However we left this case for further research and focus here on the specific feature of auction markets from a market taker viewpoint.

\subsection{Trading costs of market takers}
\label{subsec:trading_cost_market_takers}
We model the aggregated group of buying (resp. selling) market takers as a single player called Player $a$ (resp. $b$). During the auctions, Player $a$ (resp. $b$) controls the intensity $\lambda^a$ (resp. $\lambda^b$) of the arrival process $N^a$ (resp. $N^b$), wishing to get minimal costs. In practice, market orders are often send to execute large metaorders over a specified time-interval. Consequently, whatever the market design, market takers are usually required to buy or sell a certain volume on a given period. To reproduce the fact that market takers intensities can neither be too high nor too low, we assume that $\lambda^a$ and $\lambda^b$ are bounded from above and below by two positive constants $\lambda_{+}$ and $\lambda_{-}$.\\

The aggregated total trading cost at time $t$ of buying market takers, denoted by $C^a_t$, satisfies
$$
C^a_t = \sum_{i = 1}^{N^{cl}_t} N^{i,a}_{\Delta_{i}}(P^{cl}_{\tau^{cl}_{i}} - P_{\tau^{cl}_{i}}).
$$
From Theorem 3.1 in Chap VI in \cite{asmussen2008applied} together with the fact that the market is regenerative we obtain the following lemma on the asymptotic behavior of the trading costs.
\begin{lemma}
\label{lemma:trading_cost}
We have the following almost sure convergence:
$$
\underset{t \rightarrow + \infty}{\lim}\frac{C^a_t}{t} = \mathbb{E}[N^a_{\tau^{cl}_1}(P^{cl}_{\tau^{cl}_{1}} - P_{\tau^{cl}_{1}})]\frac{\nu}{1+\nu h}.
$$
\end{lemma}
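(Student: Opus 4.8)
The plan is to recognise $C^a_t$ as a cumulative reward attached to the renewal process $(N^{cl}_t)_{t\ge 0}$ and then apply the renewal--reward theorem (Theorem~3.1, Chap.~VI of \cite{asmussen2008applied}, as flagged in the statement). Writing $X_i = N^{i,a}_{\Delta_i}(P^{cl}_{\tau^{cl}_i} - P_{\tau^{cl}_i})$, we have $C^a_t = \sum_{i=1}^{N^{cl}_t} X_i$, the renewal epochs being the clearing times $\tau^{cl}_i$, with inter-arrival times $\Delta_i = \tau^{cl}_i - \tau^{cl}_{i-1} = (\tau^{i,a}_1\wedge\tau^{i,b}_1) + h$ that are i.i.d.\ with mean $1/\nu + h = (1+\nu h)/\nu$ by Assumption~\ref{assumption:order_flow}(ii). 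The statement will then follow from $C^a_t/t \to \mathbb{E}[X_1]/\mathbb{E}[\Delta_1]$ almost surely.

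First I would check that $(X_i)_{i\ge 1}$ is an i.i.d.\ sequence. From the clearing price formula \eqref{eq:clearing_price} together with $\tilde P_k = P_{\tau^{cl}_{i-1}+\tau^{i,mm}_k}+g_k$, the deviation $P^{cl}_{\tau^{cl}_i} - P_{\tau^{cl}_i}$ equals
$$
\frac{1}{N^{i,mm}_{\Delta_i}}\sum_{k=1}^{N^{i,mm}_{\Delta_i}}\bigl(P_{\tau^{cl}_{i-1}+\tau^{i,mm}_k} - P_{\tau^{cl}_i} + g_k\bigr) + \frac{1}{K}\,\frac{I^i_{\Delta_i}}{N^{i,mm}_{\Delta_i}},
$$
so it is a measurable function of (a) the counting data $(N^{i,mm},N^{i,a},N^{i,b},I^i)$ on the $i$-th cycle, (b) the noises $(g_k)$ used during that cycle, and (c) the increments of $W$ over $[\tau^{cl}_{i-1},\tau^{cl}_i]$; the same holds for the factor $N^{i,a}_{\Delta_i}$, hence for $X_i$. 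By Assumption~\ref{assumption:order_flow}(i) the cycle data in (a) are i.i.d.; the $(g_k)$ are i.i.d.\ and independent of everything else; and since $(N^{i,mm},N^{i,a},N^{i,b})$ is independent of $P$, conditionally on the cycle endpoints the Brownian increments over the disjoint cycles are independent, each with a law depending only on $\Delta_i$. Combining these, $X_i$ depends only on the $i$-th regeneration cycle, and the $(X_i)$ are i.i.d.

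Next I would discharge the only genuine hypothesis of the renewal--reward theorem, integrability of $X_1$. By Cauchy--Schwarz,
$$
\mathbb{E}\bigl[|X_1|\bigr] \le \sqrt{\mathbb{E}\bigl[(N^a_{\tau^{cl}_1})^2\bigr]}\;\sqrt{\mathbb{E}\bigl[(P^{cl}_{\tau^{cl}_1}-P_{\tau^{cl}_1})^2\bigr]},
$$
where the first factor is finite by Assumption~\ref{assumption:order_flow}(iii) and the second equals $E(h)$, finite by Theorem~\ref{th:utility_function} (its right-hand side is finite since $\mathbb{E}[I^2_{\tau^{cl}_1}] = v^2\,\mathbb{E}[(N^a_{\tau^{cl}_1}-N^b_{\tau^{cl}_1})^2] < \infty$ by Assumption~\ref{assumption:order_flow}(iii) and the displayed integrals converge). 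Since also $\mathbb{E}[\Delta_1]<\infty$, the renewal--reward theorem yields, almost surely,
$$
\frac{C^a_t}{t} = \frac{1}{t}\sum_{i=1}^{N^{cl}_t} X_i \underset{t\rightarrow +\infty}{\longrightarrow} \frac{\mathbb{E}[X_1]}{\mathbb{E}[\Delta_1]} = \mathbb{E}\bigl[N^a_{\tau^{cl}_1}(P^{cl}_{\tau^{cl}_1}-P_{\tau^{cl}_1})\bigr]\,\frac{\nu}{1+\nu h}.
$$
The main obstacle is precisely the bookkeeping of the i.i.d.\ step: one must be careful that, although $P^{cl}_{\tau^{cl}_i}-P_{\tau^{cl}_i}$ a priori refers to values of $P$ at absolute calendar times, it enters only through Brownian increments internal to cycle $i$, so that the independence of $P$ from the order-flow processes combined with Assumption~\ref{assumption:order_flow}(i) does produce a bona fide i.i.d.\ reward sequence; once this is established, the conclusion is a direct invocation of the quoted renewal--reward theorem.
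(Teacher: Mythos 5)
Your proof is correct and follows essentially the same route as the paper, which likewise obtains the result from the regenerative structure of the market together with the renewal--reward theorem (Theorem 3.1, Chap.~VI of Asmussen); you simply spell out the details (i.i.d.\ rewards per cycle, integrability via Cauchy--Schwarz and Assumption~\ref{assumption:order_flow}(iii), and $\mathbb{E}[\Delta_1]=(1+\nu h)/\nu$) that the paper leaves implicit.
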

Therefore, in the long run, the average trading cost of buying market takers is a multiple of
$$
\mathbb{E}[N^a_{\tau^{cl}_1}(P^{cl}_{\tau^{cl}_{1}} - P_{\tau^{cl}_{1}})]  = \frac{v}{K}\mathbb E[\frac{1}{N^{mm}_{\Delta_i}}] \mathbb{E}[N^a_{\tau^{cl}_1}(N^a_{\tau^{cl}_1} - N^b_{\tau^{cl}_1})].
$$
Now writing $N^a_{\tau^{cl}_1} = N^a_{\tau^{op}_1+h} - N^a_{\tau^{op}_1} + N^a_{\tau^{op}_1}$ and using the fact that $N^a_{\tau^{op}_1}$ is either equal to one or zero, solving the problem of Player $a$ is equivalent to be able to minimize
$$
\mathbb{E}[N^a_h(N^a_h-N^b_h)]
$$
when $(N^a_0, N^b_0) = (1, 0)$ and when $(N^a_0, N^b_0) = (0, 1)$. Consequently, for any $(\alpha,\beta)\in \mathbb N^2$, we consider the more general problem for Player $a$ minimizing $$\mathbb{E}[N^a_h(N^a_h-N^b_h)|(N^a_0, N^b_0) = (\alpha,\beta)].$$ In the same way, Player $b$ minimizes $\mathbb{E}[N^b_h(N^b_h-N^a_h)|(N^a_0, N^b_0) = (\alpha,\beta)]$ . Each player aims at deriving its own trading intensity which will lead to the smallest possible trading costs for him.\\

\noindent Note that in our setting, Assumption \ref{assumption:order_flow} implies that market takers reset their strategies at the beginning of each auction. We could have considered the case where market takers optimize their behavior all along the day. However, since we are interested in the effects of auction durations in a stationary context, our framework remains reasonable.

\subsection{Nash equilibrium}
We now give our result on the existence of a Nash equilibrium in this game of competing market takers. We consider that market takers control their trading intensities. The set of admissible controls is denoted by $\mathcal{U}$ and defined as the set of $\mathbb F-\text{predictable processes}$ with values in $[\lambda_-, \lambda_{+}]$ for fixed $0<\lambda_-\leq \lambda_+$. Any couple of strategies $(\lambda_a,\lambda_b) \in \mathcal U^2$ of Player $a$ and $b$ induces a probability measure $\mathbb{P}^{\lambda_a, \lambda_b}$ such that
$$
N^a_\cdot - \int_0^\cdot \lambda^a_s\mathrm{d}s~\text{and}~N^b_\cdot - \int_0^\cdot \lambda^b_s\mathrm{d}s$$ are martingales under $\mathbb{P}^{\lambda_a, \lambda_b}$. In order to minimize its costs, Player $a$ solves
	\begin{equation}
	\label{eq:buyer_pb}
		 \inf_{\lambda^a \in \mathcal U} V^{a, \alpha,\beta}_h(\lambda_a,\lambda_b),
	\end{equation}
	with $V^{a, \alpha,\beta}_h(\lambda_a,\lambda_b)= \mathbb{E}^{\mathbb{P}^{\lambda_a, \lambda_b}}[N^a_h(N^a_h - N^b_h)|(N^a_0, N^b_0) = (\alpha,\beta)],$
	for fixed $\lambda_b$ chosen by the selling market taker, Player $b$. Symmetrically, Player $b$ solves
	\begin{equation}
	\label{eq:seller_pb}
	\inf_{\lambda^b\in \mathcal U}V^{b,\alpha,\beta}_h(\lambda_a,\lambda_b),
	\end{equation}
	with $V^{b,\alpha,\beta}_h(\lambda_a,\lambda_b)= \mathbb{E}^{\mathbb{P}^{\lambda_a, \lambda_b}}[N^b_h(N^b_h - N^a_h)|(N^a_0, N^b_0) = (\alpha,\beta)]$
	for fixed $\lambda_a$ chosen by the buying market taker, Player $a$. A Nash equilibrium is obtained if the two optimization problems \eqref{eq:buyer_pb} and \eqref{eq:seller_pb} can be addressed simultaneously.\\

Note that this framework is realistic regarding the information observable by market takers. Indeed we only assume that market takers observe market orders imbalance. This information is for example available on the Euronext platform for the opening and closing auctions and on the auctions market of BATS-Cboe.\\

\noindent	We prove that this game indeed admits a (non-necessarily unique) Nash equilibrium with corresponding optimal controls $(\lambda_a^\star, \lambda_b^\star)$. More precisely using these notations we have the following result.
\begin{theorem}
\label{theo:imbalance_nash_equilibrium}
There exists a Nash equilibrium to the simultaneous optimization problem \eqref{eq:buyer_pb}-\eqref{eq:seller_pb} given by some Markovian controls\footnote{The notion of Markovian control has to be understood in the sense of \cite[Definition 2.10]{carmona2018probabilistic}} $(\lambda_a^\star, \lambda_b^\star)$ satisfying
	$$
  \inf_{\lambda^a \in \mathcal U} V^{a, \alpha,\beta}_h(\lambda_a,\lambda_b^\star)=  \mathbb{E}^{\mathbb{P}^{\lambda_a^\star, \lambda_b^\star}}[N^a_h(N^a_h - N^b_h)|(N^a_0, N^b_0) = (\alpha,\beta)]$$
	\text{ and }	$$  \inf_{\lambda^b \in \mathcal U} V^{b, \alpha,\beta}_h(\lambda_a^\star,\lambda_b)=  \mathbb{E}^{\mathbb{P}^{\lambda_a^\star, \lambda_b^\star}}[N^b_h(N^b_h - N^a_h)|(N^a_0, N^b_0) = (\alpha,\beta)].
	$$
\end{theorem}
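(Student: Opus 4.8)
The plan is to exploit the fact that the pair $(N^a,N^b)$ is a Markov process, so that it is harmless to look for a Nash equilibrium within feedback controls $\lambda^a_t=\ell^a(t,N^a_{t^-},N^b_{t^-})$, $\lambda^b_t=\ell^b(t,N^a_{t^-},N^b_{t^-})$ with $\ell^a,\ell^b\colon[0,h]\times\mathbb N^2\to[\lambda_-,\lambda_+]$ Borel. I would fix the reference probability $\mathbb P^\star:=\mathbb P^{\lambda_-,\lambda_-}$, under which $N^a$ and $N^b$ are independent Poisson processes of intensity $\lambda_-$; by the Girsanov theorem for point processes every admissible $(\ell^a,\ell^b)$ induces a measure $\mathbb P^{\ell^a,\ell^b}\ll\mathbb P^\star$ with an explicit density process, and since all intensities lie in $[\lambda_-,\lambda_+]$ the payoffs $\xi^a:=N^a_h(N^a_h-N^b_h)$ and $\xi^b:=N^b_h(N^b_h-N^a_h)$ have moments of every order, with bounds uniform over admissible controls. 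Recording these uniform moment estimates is the first step: it makes the definition of $V^{a,\alpha,\beta}_h$ (resp.\ $V^{b,\alpha,\beta}_h$) as $\mathbb E^{\mathbb P^{\ell^a,\ell^b}}[\xi^a\mid(N^a_0,N^b_0)=(\alpha,\beta)]$ meaningful and underpins the compactness and stability used later.

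The second step is to characterise one player's best response to a fixed feedback control of the other. For fixed $\ell^b$, let $u^a(t,m,n)$ denote Player $a$'s value from time $t$ in state $(m,n)$, so $u^a(0,\alpha,\beta)=\inf_{\lambda^a}V^{a,\alpha,\beta}_h(\lambda^a,\ell^b)$. Dynamic programming shows that $u^a$ is the unique solution, in the class of functions of at most quadratic growth in $(m,n)$ uniformly on $[0,h]$, of the backward ODE system
\begin{equation*}
\partial_t u^a(t,m,n)+\ell^b(t,m,n)\big(u^a(t,m,n+1)-u^a(t,m,n)\big)+\inf_{\lambda\in[\lambda_-,\lambda_+]}\lambda\big(u^a(t,m+1,n)-u^a(t,m,n)\big)=0,
\end{equation*}
with $u^a(h,m,n)=m(m-n)$; this is the Markovian form of a BSDE driven by the $\mathbb P^\star$-compensated martingales of $N^a,N^b$. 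Since $x\mapsto\inf_{\lambda\in[\lambda_-,\lambda_+]}\lambda x=\lambda_-x^+-\lambda_+x^-$ is globally Lipschitz and the quadratic-growth class is preserved by the finite-difference operators, well-posedness follows from a Picard iteration combined with a Gronwall estimate in the weighted norm $\sup_{t,m,n}|u(t,m,n)|/(1+m^2+n^2)$; a Borel selection of the $\arg\min$ then gives an optimal \emph{bang--bang} feedback $\ell^{a,\star}(t,m,n)\in\{\lambda_-,\lambda_+\}$, optimality being confirmed by a routine verification argument. The symmetric statements hold for Player $b$, with value function $u^b$ and terminal condition $u^b(h,m,n)=n(n-m)$. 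This is where we extend \cite{hamadene2014bang} from the diffusive to the counting-process setting; the affine dependence of the generator on $(\lambda^a,\lambda^b)$ is what makes each best-response Hamiltonian bang--bang in the corresponding player's own finite difference only.

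The third and decisive step is to realise a Nash equilibrium as a fixed point of the best-response correspondence: one seeks $(u^a,u^b,\ell^{a,\star},\ell^{b,\star})$ solving the equation above for $u^a$ with $\ell^b=\ell^{b,\star}$, its analogue for $u^b$ with $\ell^a=\ell^{a,\star}$, and $\ell^{a,\star}(t,m,n)\in\arg\min_{\lambda}\lambda(u^a(t,m+1,n)-u^a(t,m,n))$, $\ell^{b,\star}(t,m,n)\in\arg\min_{\lambda}\lambda(u^b(t,m,n+1)-u^b(t,m,n))$. Because these $\arg\min$ sets are not singletons, a contraction argument is unavailable; instead I would regularise the Hamiltonians, replacing $x\mapsto\lambda_-x^+-\lambda_+x^-$ by a $C^1$ Lipschitz approximant, so that for each regularisation level the coupled ODE (equivalently BSDE) system admits a solution via a Schauder-type argument on the space of value functions, using the uniform a priori bounds of the first step, and then pass to the limit using those bounds together with the stability of the linear Kolmogorov system under the corresponding convergence of coefficients. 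The limit solves the original coupled system, and affineness of the Hamiltonians lets one take the equilibrium feedbacks genuinely bang--bang, equal to $\lambda_-$ or $\lambda_+$ according to the sign of the relevant finite difference, without altering $(u^a,u^b)$. For every $(\alpha,\beta)\in\mathbb N^2$ this yields exactly the two identities in the statement, with $(\lambda_a^\star,\lambda_b^\star)=(\ell^{a,\star},\ell^{b,\star})$.

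The main obstacle is this last step: the genuine non-uniqueness and discontinuity of the bang--bang optimal controls, which rules out the usual contraction route and forces a compactness/stability argument, carried out over the infinite state space $\mathbb N^2$ so that everything hinges on the uniform control of the moments of $(N^a,N^b)$ under the whole family of admissible changes of measure. Establishing those uniform estimates, and the continuous dependence of the Kolmogorov system on the controls in the topology for which the admissible feedbacks form a compact set, is where the real work lies; the verification arguments and the ODE well-posedness of steps one and two are routine by comparison.
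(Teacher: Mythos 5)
Your overall strategy (a priori moment bounds under all admissible changes of measure, best responses via the backward Kolmogorov/HJB system on $[0,h]\times\mathbb{N}^2$, then smoothing of the bang--bang Hamiltonian and a compactness/limit argument) is essentially the route the paper takes, following Hamad\`ene--Mu adapted to counting processes. However, the decisive step — what happens on the indifference sets — contains a genuine gap, and the one concrete claim you make to resolve it is incorrect. You assert that ``affineness of the Hamiltonians lets one take the equilibrium feedbacks genuinely bang--bang, equal to $\lambda_-$ or $\lambda_+$ according to the sign of the relevant finite difference, without altering $(u^a,u^b)$.'' This fails because of the cross-coupling: player $a$'s control on his own indifference set $\{D_a u^a=0\}$ is irrelevant for $u^a$, but it multiplies $D_a u^b$ in player $b$'s equation, and $D_a u^b$ has no reason to vanish where $D_a u^a$ does. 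If that set is charged by $\mathrm{d}t\times\mathrm{d}\mathbb{P}$ (which cannot be ruled out here, and is exactly the delicate case), replacing the limiting control by an arbitrary bang--bang selection changes $u^b$, hence changes $D_b u^b$, hence destroys player $b$'s best-response property, and the fixed point is lost. This is why the paper does \emph{not} obtain a purely bang--bang equilibrium: the equilibrium intensities are $\lambda^{\star}_a(DV^a,\theta)$, $\lambda^{\star}_b(DV^b,\vartheta)$, where on the indifference sets they equal the $[\lambda_-,\lambda_+]$-valued processes $\theta,\vartheta$ obtained as \emph{weak} $\mathcal H^2$ limits of the smoothed controls $\lambda^{n}(Z^{a,n})\mathbf 1_{Z^a=0}$, $\lambda^{n}(Z^{b,n})\mathbf 1_{Z^b=0}$; these thresholds are part of the equilibrium data and cannot be projected onto $\{\lambda_-,\lambda_+\}$ for free.

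Relatedly, your limit passage (``stability of the linear Kolmogorov system under the corresponding convergence of coefficients'') presupposes a mode of convergence of the regularized feedbacks that you do not have: on $\{D_b u^b=0\}$ the sequence $\lambda^{n}(D_b u^{b,n})$ in general only converges weakly, so one must pass to the limit in products of a weakly converging control with the other finite differences. The paper handles this by first proving strong $\mathcal H^2$ convergence of the $Z^{i,n}$ (via a Cauchy argument resting on weak $L^2$ convergence of the generators against the law of the bivariate Poisson process) and then a martingale-representation argument to identify the limit of $\int \Phi_n(Z^{a,n}_{2})\lambda^{n}(Z^{b,n}_{2})\mathbf 1_{Z^{b}_{2}=0}\,\mathrm{d}u$ as $\int Z^{a}_{2}\vartheta_u\mathbf 1_{Z^{b}_{2}=0}\,\mathrm{d}u$, and finally a verification step (the analogue of Proposition 2.4 in Hamad\`ene--Mu) showing that the controls built from $(DV^a,DV^b,\theta,\vartheta)$ form a Nash equilibrium. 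You correctly flag that this is ``where the real work lies,'' but the proposal neither supplies this mechanism nor a substitute for it, and the bang--bang shortcut you offer in its place is not valid.
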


\noindent The proof of Theorem \ref{theo:imbalance_nash_equilibrium} is provided in Appendix \ref{appendix:existence_nash_equilibrium}. The HJB equation related to the optimization problem is somehow degenerated. This prevents us from using classical arguments to obtain a solution. In order to give intuition about it we give here a short sketch of the proof.
\begin{itemize}
\item[Step 0.] We first consider a smoothed version of the HJB equation associated with our problem. Hence, the proof of Theorem \ref{theo:imbalance_nash_equilibrium} is reduced to the existence and then convergence of the solutions of a (smooth) system of HJB equations (see Theorem \ref{thm:edpapproach}).
\item[Step 1.]We consider the backward stochastic differential equation (BSDE for short) associated to the smoothed HJB equation. The existence of a Nash equilibrium is then related to the existence of a solution to this (Lipschitz) BSDE.
\item[Step 2.] We prove that the sequence of BSDEs converges in suitable spaces towards a solution of a degenerate BSDE. 
\item[Step 3.] We conclude by showing that the solution we obtain at the limit corresponds to a Nash equilibrium of the competition between market takers.
\end{itemize}

Note that we do not get uniqueness of the Nash equilibrium, only the existence. Since the generator of the BSDE associated to this problem has discontinuities there is almost no chance that a uniqueness result can be found by classical methods. Moreover, even if the method used give uniqueness of the limit Nash equilibrium, this limit will be strongly dependent of the smoothing procedure.\\

The proof of Theorem \ref{theo:imbalance_nash_equilibrium} also provides a numerical method to approximate $V^{a,1,0}_h (\lambda_a^\star,\lambda_b^\star)$ and $V^{b,1,0}_h (\lambda_a^\star,\lambda_b^\star)$ using solutions of some integro-differential equations, see Appendix \ref{appendix:numerical_nash}. It is particularly important since it enables us to compute optimal auction durations when market takers are playing the Nash equilibrium. This is because the function $E$ of Theorem \ref{th:utility_function} explicitly depends on $V^{a,1,0}_h (\lambda_a^\star,\lambda_b^\star)$ and $V^{b,1,0}_h (\lambda_a^\star,\lambda_b^\star)$, as stated in the following corollary.
\begin{corollary}\label{cor:Eh}
Under the Nash equilibrium $(\lambda_a^{\star},\lambda_b^\star)$, we have
	$$
	\mathbb{E}[I_{\tau^{op}_1 + h}^2] = V^{a,1,0}_h (\lambda_a^\star,\lambda_b^\star)+ V^{b,1,0}_h (\lambda_a^\star,\lambda_b^\star).
	$$
	\end{corollary}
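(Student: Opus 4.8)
The plan is to express $\mathbb{E}[I_{\tau^{op}_1+h}^2]$ in terms of quantities attached to a single auction, to split it by an elementary algebraic identity, to recognise the two resulting pieces as $V^{a,\cdot,\cdot}_h$ and $V^{b,\cdot,\cdot}_h$ evaluated at the equilibrium controls, and finally to remove an average over the direction of the triggering order using the symmetry of the game. Since $\tau^{cl}_1=\tau^{op}_1+h$ and (setting the order size $v=1$) $I_{\tau^{op}_1+h}=N^a_{\tau^{cl}_1}-N^b_{\tau^{cl}_1}$, it suffices to compute $\mathbb{E}[(N^a_{\tau^{cl}_1}-N^b_{\tau^{cl}_1})^2]$. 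First I would use the identity $(x-y)^2=x(x-y)+y(y-x)$, valid for all real $x,y$, with $x=N^a_{\tau^{cl}_1}$ and $y=N^b_{\tau^{cl}_1}$, so that
$$
\mathbb{E}\big[I_{\tau^{op}_1+h}^2\big]=\mathbb{E}\big[N^a_{\tau^{cl}_1}(N^a_{\tau^{cl}_1}-N^b_{\tau^{cl}_1})\big]+\mathbb{E}\big[N^b_{\tau^{cl}_1}(N^b_{\tau^{cl}_1}-N^a_{\tau^{cl}_1})\big].
$$

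Next I would translate time to the opening of the auction. Put $\bar N^a_s=N^a_{\tau^{op}_1+s}$ and $\bar N^b_s=N^b_{\tau^{op}_1+s}$ for $0\le s\le h$. As $N^a$ and $N^b$ almost surely have no common jump, exactly one market order is sent at $\tau^{op}_1$, so $(\bar N^a_0,\bar N^b_0)\in\{(1,0),(0,1)\}$; let $p=\mathbb{P}[(\bar N^a_0,\bar N^b_0)=(1,0)]$. Since the auction lasts the deterministic time $h$ and the equilibrium controls $(\lambda_a^\star,\lambda_b^\star)$ are, by construction, Markovian in the current state and in the time elapsed since the opening, the strong Markov property together with the regeneration hypotheses of Assumption \ref{assumption:order_flow} shows that, conditionally on $(\bar N^a_0,\bar N^b_0)=(\alpha,\beta)$, the process $(\bar N^a_s,\bar N^b_s)_{0\le s\le h}$ has the law of the controlled pair $(N^a,N^b)$ under $\mathbb{P}^{\lambda_a^\star,\lambda_b^\star}$ started from $(\alpha,\beta)$. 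Taking expectations in the displayed identity conditionally on the direction of the triggering order and reading off the definitions of $V^{a,\alpha,\beta}_h$ and $V^{b,\alpha,\beta}_h$, I would get, with the shorthand $W^{\alpha,\beta}:=V^{a,\alpha,\beta}_h(\lambda_a^\star,\lambda_b^\star)+V^{b,\alpha,\beta}_h(\lambda_a^\star,\lambda_b^\star)$,
$$
\mathbb{E}\big[I_{\tau^{op}_1+h}^2\big]=p\,W^{1,0}+(1-p)\,W^{0,1}.
$$

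It then remains to show $W^{1,0}=W^{0,1}$. I would use that the problem \eqref{eq:buyer_pb}--\eqref{eq:seller_pb} is invariant under the exchange of the labels $a$ and $b$ and that the Nash equilibrium constructed in Appendix \ref{appendix:existence_nash_equilibrium} is symmetric (i.e. $\lambda_b^\star$ is the image of $\lambda_a^\star$ under that exchange): the relabelling then maps the $(\lambda_a^\star,\lambda_b^\star)$-dynamics issued from $(1,0)$ onto those issued from $(0,1)$ with the roles of $a$ and $b$ interchanged, whence $V^{a,1,0}_h(\lambda_a^\star,\lambda_b^\star)=V^{b,0,1}_h(\lambda_a^\star,\lambda_b^\star)$ and $V^{b,1,0}_h(\lambda_a^\star,\lambda_b^\star)=V^{a,0,1}_h(\lambda_a^\star,\lambda_b^\star)$, i.e. $W^{1,0}=W^{0,1}$. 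Since $p+(1-p)=1$, the previous display collapses to $W^{1,0}=V^{a,1,0}_h(\lambda_a^\star,\lambda_b^\star)+V^{b,1,0}_h(\lambda_a^\star,\lambda_b^\star)$, which is the assertion of Corollary \ref{cor:Eh}. The step I expect to be the main obstacle is the time translation in the second paragraph: one must argue carefully that the auction ``restarts'' at the random time $\tau^{op}_1$, namely that conditionally on the direction of the opening order the order flow on $[\tau^{op}_1,\tau^{cl}_1]$ is distributed as the controlled process from the matching deterministic initial state and is independent of $\tau^{op}_1$ itself; this is precisely where the Markovian form of the equilibrium controls from Appendix \ref{appendix:existence_nash_equilibrium} and the independence/regeneration structure of Assumption \ref{assumption:order_flow} enter. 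The remaining ingredients — the algebraic identity of the first paragraph and the symmetry bookkeeping — are routine.
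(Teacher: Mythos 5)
Your proposal is correct and follows essentially the same route as the paper's own proof in Appendix \ref{appendix:numerical_nash}: the identity $I_{\tau^{op}_1+h}^2=N^a_{\tau^{cl}_1}(N^a_{\tau^{cl}_1}-N^b_{\tau^{cl}_1})+N^b_{\tau^{cl}_1}(N^b_{\tau^{cl}_1}-N^a_{\tau^{cl}_1})$, conditioning on the direction of the triggering order, and the $a$--$b$ symmetry giving $V^{a,0,1}_h+V^{b,0,1}_h=V^{a,1,0}_h+V^{b,1,0}_h$ so that the two conditional contributions coincide and the probabilities sum to one. Your treatment of the time translation at $\tau^{op}_1$ and of the symmetry of the equilibrium is, if anything, slightly more explicit than the paper's.
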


\section{Optimal auction durations for some European stocks}
\label{sec:results}

We give here the results obtained on real data when applying our methodology to derive optimal auction durations. We consider both situations of non-strategic and strategic market takers and compare with the CLOB case.

\subsection{Description of the data}
\label{subsec:description_data}

We have access to intra-day market data for $77$ of the most liquid stocks traded on Euronext exchange, for all trading days of September 2018. For each stock, every trade is reported with the following information:
\begin{itemize}
\item Timestamp of the trade.
\item Traded volume.
\item Execution price.
\item Best bid and ask prices just before the transaction.
\item Volumes at best bid and best ask just before the transaction.
\end{itemize}
We discard from our study trades related to $1\%$ upper and lower quantiles in term of volume in order to remove some outliers.

\subsection{Calibration of model parameters}
\label{subsec:calibrate_model}
Our market data are CLOB data and not auction data. Still, we are able to calibrate the parameters of our model as explained below.

\subsubsection{Market takers parameters}
\label{subsubsec:market_takers_parameters}
The behavior of market takers is characterized by three parameters:
\begin{itemize}
\item Their intensity of arrival between two auctions $\nu$.
\item The volume of market orders $v$.
\item The upper and lower bounds for their trading intensity $\lambda_-$ and $\lambda_+$.
\end{itemize}
CLOB corresponds to the case where auctions last zero second. Consequently, in our framework, the market order flow in a CLOB market is given by two Poisson processes $N^a$ and $N^b$ with intensity $\nu$. Thus we estimate $\nu$ by the average number of market orders per day divided by the duration of a trading day and $v$ by the average volume of a market order. Finally we set $\lambda_{+}=2\nu$ and $\lambda_{-} = \nu / 4$. This choice seems reasonable since the market order flow should have similar order of magnitude irrespectively of the market design so that agents can complete execution of their metaorders.

\subsubsection{Market makers parameters and calibration of price volatility }
\label{subsubsec:market_makers_parameters}
The behavior of market makers is characterized by three parameters:
\begin{itemize}
\item The variance $\sigma$ of the $(g_i)_{i\geq 0}$. We assume that $\sigma$ is equal to the implicit spread of the asset that we estimate from the uncertainty zones model of \cite{dayri2015large}.
\item The intensity of market makers arrivals $\mu$.
\item The slope of their supply function $K$.
\end{itemize}
Let $\alpha$ be the tick value of the asset. According to our model, in the CLOB case, the average volume available in the first limit of the LOB when a market taker arrives, denoted by $e$, satisfies 
$$
e = K \alpha \mathbb{E}[ N^{mm}_{\tau^{op}_1}] = K \alpha \frac{\nu + \mu}{\nu}
$$
and the average squared volume of the first limit, denoted by $\varsigma$, satisfies
$$
\varsigma = K^2 \alpha^2 \mathbb{E}[ (N^{mm}_{ \tau^{op}_1 })^2] = K^2 \alpha^2 \frac{\nu + \mu}{\nu} (1 + 2\frac{\mu}{\nu}).
$$
Those results are a direct consequence of Assumption \ref{assumption:market_maker} and of some computations. Consequently we have
$$
K =  (2e- \frac{\varsigma }{e})\alpha^{-1} \text{ and } \mu = \nu ( \frac{e}{\alpha K}- 1).
$$
So we can estimate $\mu$ and $K$ from empirical measurements of $e$ and $\varsigma$. Finally, we estimate the volatility $\sigma_f$ of the efficient price from the five minutes sampling based realized volatility of the traded price.

\subsection{Numerical results}
\label{subsec:numerical_results}

Using our approach, we provide in Table \ref{tab:results_a} and \ref{tab:results_b} the optimal auction durations for 77 stocks traded on Euronext. We give the results when assuming Poisson arrivals for the market takers and when considering they optimize their trading costs, leading to a Nash equilibrium (see Appendix \ref{appendix:numerical_nash} for numerical aspects in this case).\\

The first column is the stock name. In the Poisson (resp. Nash) case, the second (resp. fourth) column is the optimal duration in seconds. The third (resp. fourth) one is the relative difference of quality of the price formation process between the optimal duration case and the CLOB situation: $(E(0)-E(h^*))/E(h^*)$. In the optimal durations columns we provide estimated optimal durations together with $90\%$ confidence interval (with respect to the estimated value for the parameter $\nu$).\\

\begin{table}[]
\centering
\begin{tabular}{l|c|c|c|c}
                    &                    DurationPoisson&                DiffrelPoisson       &                    DurationNash&           DiffrelNash\\
\hline
Bouygues            &                    228 [226;230]   &                1\%           &              152 [150;153] &              20\%\\
Arkema              &                    397 [392;400]   &                23 \%         &              268 [265; 272]&              19 \%\\
Michelin            &                    1053 [1046;1060]&                60\%          &              763 [757;768] &              89\%\\
Eurofins Scient.    &                    761 [749;773]   &                18\%          &              554 [546;563] &              37\%\\
Engie               &                    866 [857;875]   &                104\%         &              866 [857;875] &              158\%\\
\hline
Stmicroelectronics  &                    177 [176;179]   &                2\%           &              123 [122;124] &              21\%\\
Alstom              &                    0 [0;0]         &                0\%           &              180 [178;181] &              14\%\\
Legrand SA          &                    325 [322;329]   &                0\%           &              216 [214;221] &              19\%\\
Eiffage             &                    0 [0;0]         &                0\%           &              149 [147;150] &              12\%\\
Eramet              &                    1086 [1074;1098]&                30\%          &              812 [803;822] &              50\%\\
\hline
SES Sa              &                    0 [0;0]         &                0\%           &              81 [80;83]    &              6\%\\
Pernod Ricard       &                    427 [423;430]   &                22\%          &              301 [298;304] &              45\%\\
Iliad               &                    163 [162;164]   &                0\%           &              109 [108;110] &              18\%\\
Faurecia            &                    0 [0;0]         &                0\%           &              36 [35;37]    &              4\%\\
Orange              &                    382 [379;385]   &                21\%          &              274 [273.6;278] &              42\%\\
\hline
Sodexo              &                    0 [0;0]         &                0\%           &              49 [51;47]    &              1\%\\
Air France - KLM    &                    295 [292;297]   &                17\%          &              218 [216;220] &              35\%\\
Teleperformance     &                    1241 [1224;1259]&                27 \%         &              881 [868;894] &              50 \%\\
Hermes              &                    295 [292;298]   &                1\%           &              205 [203;207] &              19\%\\
Eutelsat Com.       &                    0 [0; 0]        &                0 \%          &              40 [39; 42]   &              2 \%\\
\hline
Nexans              &                    487 [480;494]   &                8\%           &              360 [356;365] &              23\%\\
Ingenico Group      &                    0 [0;0]         &                0\%           &              143 [142;144] &              15\%\\
Unibail - Wfd Unibai&                    187 [186;188]   &                19\%          &              142 [141;143] &              36\%\\
Plastic Omnium      &                    0 [0;0]         &                0\%           &              176.5 [176.3;176.8] &              9\%\\
Veolia Environ.     &                    350 [346;353]   &                3\%           &              253 [251;256] &              21\%\\
\hline
Schneider Electric  &                    246 [245;248]   &                39\%          &              171 [170;172] &              65\%\\
Peugeot             &                    386 [383;389]   &                10\%          &              282 [280;285] &              29\%\\
Vinci               &                    350 [348;353]   &                39\%          &              252 [250;253] &              64\%\\
CGG                 &                    837 [827;847]   &                15\%          &              605 [597;613] &              36\%\\
Atos                &                    962 [954;969]   &                66\%          &              700 [694;706] &              95\%\\
\hline
Suez Environnement  &                    0 [0;0]         &                0\%           &              311 [308;315] &              14\%\\
Danone              &                    204 [203;206]   &                15\%          &              146 [145;147] &              35\%\\
Kering              &                    133 [132;134]   &                19\%          &              93.4 [93.1;94]    &              42\%
\end{tabular}
\caption{Optimal auction durations (in seconds) Part 1 with a $90\%$ confidence interval.}
\label{tab:results_a}
\end{table}

\begin{table}[]
\centering
\begin{tabular}{l|c|c|c|c}
                    &                    DurationPoisson&                DiffrelPoisson       &                    DurationNash&           DiffrelNash\\
\hline
EssilorLuxottica    &                    342 [339;345]   &                30\%          &              238 [236;240] &              55\%\\
Lagardere           &                    0 [0;0]         &                0 \%          &              42 [39; 44]   &              3 \%\\
Credit Agricole     &                    87.7 [87.2;88.5]      &                2\%           &              58.6 [58;59.4]    &              22\%\\
CapGemini           &                    502 [497;508]   &                20\%          &              354 [350;358] &              43\%\\
Lvmh                &                    121 [120;122]   &                6\%           &              87.3 [87;88]    &              25\%\\
\hline
Valeo               &                    0 [0;0]         &                0\%           &              98 [97;98.2]    &              16\%\\
Air Liquide         &                    627 [622;632]   &                35\%          &              459 [456;463] &              58\%\\
Total               &                    359 [357;360]   &                60\%          &              261 [260;263] &              89\%\\
Vivendi             &                    1023 [1014;1031]&                42\%          &              750 [743;756] &              67\%\\
Casino Guichard     &                    158 [157;159]   &                15\%          &              119 [118;120] &              28\%\\
\hline
Societe Generale    &                    104 [104;105]   &                18\%          &              74.1 [74;74.3]    &              40\%\\
Klepierre           &                    0 [0;0]         &                0\%           &              219 [217;221] &              14\%\\
Publicis Groupe     &                    601 [595;606]   &                32\%          &              428 [424;432] &              56\%\\
Sanofi              &                    124 [123;124]   &                12\%          &              88.2 [88;89]    &              32\%\\
Thales              &                    644 [637;652]   &                23\%          &              454 [449;460] &              46\%\\
\hline
TechnipFMC          &                    331 [327;334]   &                7\%           &              234 [232;236] &              27\%\\
Bnp Paribas         &                    104.3 [104.2;104.8]   &                18\%          &              73.4 [73.2;74]    &              41\%\\
Safran              &                    0 [0;0]         &                0\%           &              107 [106;108] &              16\%\\
Saint Gobain        &                    0 [0;0]         &                0\%           &              58.2 [58;59]    &              11\%\\
Orpea               &                    834 [822;846]   &                29\%          &              578 [569;587] &              55\%\\
\hline
Carrefour           &                    410 [407;413]   &                34\%          &              293 [291;295] &              58\%\\
Ipsen               &                    827 [817;838]   &                65\%          &              551 [544;559] &              101\%\\
Natixis             &                    351 [348;354]   &                9\%           &              253 [251;255] &              28\%\\
EDF                 &                    341 [338;344]   &                15\%          &              246 [244;248] &              35\%\\
Axa                 &                    252 [251;254]   &                36\%          &              182 [181;183] &              60\%\\
\hline
Dassault Systemes   &                    316 [312;319]   &                7\%           &              222 [220;224] &              27\%\\
Accor Hotels        &                    0 [0;0]         &                0\%           &              105.3 [105.8;104.7] &              6\%\\
Airbus              &                    210 [209;211]   &                34\%          &              146 [145;147] &              60\%\\
Ubi Soft Entertain  &                    0 [0;0]         &                0\%           &              43.4 [43;44]    &              1\%\\
Renault             &                    0 [0;0]         &                0\%           &              41.7 [41;42.2]    &              3\%\\
\hline
Solvay              &                    528 [522;534]   &                11\%          &              375 [371;380] &              32\%\\
Edenred             &                    313 [309;316]   &                8\%           &              210 [208;212] &              29\%
\end{tabular}
\caption{Optimal auction durations (in seconds) Part 2 with a $90\%$ confidence interval.}
\label{tab:results_b}
\end{table}

The optimal duration range is essentially between $0$ and $10$ minutes and our results are very robust to the parameter $\nu$. For all the assets such that the optimal auction duration for Poisson market takers is positive, the optimal duration in the Nash case is smaller. Some assets have the CLOB structure as optimal in the Poisson case. However, when considering the Nash case, CLOB become always suboptimal. We also remark that no straightforward structural explanation (sector, capitalization, ...) seems to explain the difference in optimal duration between assets. Finding microstructural foundations for these results is left for further work.\\

As explained in Section \ref{sec:imbalance_process}, we constrain market takers trading intensities to the range $[\lambda_{-}, \lambda_{+}]$. From numerical experiments, by testing several ranges of controls $[\lambda_{-}, \lambda_{+}]$, we have observed that the optimal duration is quite robust to those parameters. Still we remark the following sensitivities: if we allow for a smaller  $\lambda_{-}$ without modifying $\lambda_{+}$, the optimal auction duration becomes larger. This is because having a small $\lambda_{-}$ means that market takers can send less market orders when the situation is not in their favor. This implies that $\mathbb{E}[I_{\tau^{op} + h}^2]$ increases slowlier with $h$. This moves the minimum of $E$ to a higher level. For symmetric reasons, if we raise $\lambda_{+}$ without changing $\lambda_{-}$, the optimal auction duration becomes smaller.\\

We notice that CLOBs are sometimes optimal in the Poisson case. When they are not, the difference in the values of the metric for $h=0$ and $h=h^*$ is typically not very large. Therefore even though CLOB markets are usually sub-optimal, they are in general leading to a fairly satisfactory market microstructure. On BATS-Cboe the auction duration is approximately $100 \text{ms}$ which is very small compared to the typical optimal auction durations we find. Moreover according to the empirical study \cite{besson2019benefits}, there is essentially only one market order involved in each auction. This means that the duration of auctions chosen by BATS-Cboe does not allow buyers and sellers market takers to match their orders, to the profit of market makers. Indeed a larger auction duration may lead to smaller gains for market makers. For example if market takers always match their orders with other market takers, market makers never collect the spread. Hence it is possible that BATS-Cboe chose this short duration in order to keep its platform attractive for market makers, which guarantee its liquidity. This is actually another possible point of view on this problematic that we have not considered in this paper. It is also likely that exchanges may be reluctant to change drastically their market design so that clients are not too surprised. This could also explain why they decided to move only slightly from the CLOB system.

\section{Policy implications and financial insights}

The main take away of our analysis is that \textit{one size does not fit all}: first in the spirit of \cite{budish2015high} we confirm that the nowadays almost universal CLOB mechanism (for liquid assets) may be suboptimal. We indeed show that auctions are quite often preferable in term of market quality. Second, the auction duration has to depend on some fundamental parameters of the considered asset (such as liquidity and volatility). Our work also underlines the crucial need of thorough quantitative analysis as a preliminary task to market structure modification. Here such an analysis enables us to compute optimal frequencies for auctions. We find that they are diverse but that reasonable order of magnitude is of a few minutes, which is probably fast enough for large investors. This is in contrast with the intuitive idea that in fast electronic markets auction frequency should be necessarily very high. Such a result is due to the fact that our criterion aims at finding the best price discovery mechanism and therefore we somehow take the investor point of view. Doing so, our philosophy is to build volume rather than speed driven market. This pushes forward the idea of a debate between exchanges and the various market participants in order to revisit market microstructure.\\

Note that in the CLOB case, usual parameters that the exchange can adjust to improve market microstructure are tick sizes and make-take fees. Given the order of magnitude of the auctions durations we find, the tick size will probably play a minor role in the market dynamics. However, the effects of a make-take fees schedule are yet to be investigated, as well as the impact of auctions in a situation with multiple competing exchanges. \\

We insist on the fact that our results do not advocate for the disappearance of CLOB markets. We indeed show that they are optimal for some assets and often not so far from optimality in terms of market quality. Going further in the idea that \textit{one size does not fit all}, we could think of alternating periods of CLOB and auction market within the same trading day. We currently investigate the relevance of such mechanism as a natural next step to the present paper.
	\clearpage
	
	\appendix

	\section{Proof of Theorem \ref{th:utility_function}}
	\label{appendix:proof:utility_function}

	We are reduced to compute :
	$$
	E(h) = \mathbb{E}[(P_{\tau^{cl}_1} - P^{cl}_{\tau^{cl}_1})^2|\widetilde{\Omega}]
	$$
When $N^{mm}$ be a Poisson process with intensity $\mu$ and $\widetilde\Omega = \{ N^{mm}_{\tau_1^{cl}} >0 \}.$ We are reduced to compute 
	$$
	E(h) = \mathbb{E}[(P_{\tau^{cl}_1} - P^{cl}_{\tau^{cl}_1})^2|\Omega].
	$$
	Thus, recalling that $\tau_1^{op}+h=\tau_1^{cl}$, we get
	\begin{align}
\nonumber	E(h) &= \mathbb{E}\Big[\big( \sum_{k = 1}^{N^{mm}_{\tau^{cl}_{1}}}  \frac{P_{\tau^{mm}_k} - P_{\tau^{cl}_1}}{N^{mm}_{\tau^{cl}_1}} \big)^2\big|\widetilde\Omega\Big] +  \mathbb{E}\Big[\big( \sum_{k = 1}^{N^{mm}_{\tau^{cl}_{1}}}  \frac{g_k}{N^{mm}_{\tau^{cl}_1}} \big)^2 \big|\widetilde\Omega\Big]+\frac{1}{K^2} \mathbb{E}\Big[\frac{I_{\tau^{cl}_1}^2 }{ {N^{mm}_{\tau^{cl}_1}}^2  }\big|\widetilde\Omega\Big]\\ 
\nonumber	&= \mathbb{P}(N^{mm}_{\tau^{cl}_1}>0)^{-1} \Big( \mathbb{E}\Big[\mathbf{1}_{N^{mm}_{\tau^{cl}_1}>0}\Big\{\big( \sum_{k = 1}^{N^{mm}_{\tau^{cl}_{1}}}  \frac{P_{\tau^{mm}_k}- P_{\tau^{cl}_1}}{N^{mm}_{\tau^{cl}_1}} \big)^2+\big( \sum_{k = 1}^{N^{mm}_{\tau^{cl}_{1}}}  \frac{g_k}{N^{mm}_{\tau^{cl}_1}} \big)^2+\frac{1}{K^2} \frac{I_{\tau^{cl}_1}^2 }{ {N^{mm}_{\tau^{cl}_1}}^2  }\Big\}\Big] \Big)\\
	\label{eq:utility_a}	&=\mathbb{P}(N^{mm}_{\tau^{cl}_1}>0)^{-1}e^{\nu h } \int_h^{+\infty } \nu e^{-\nu t}\big( g(t)+\sigma^2 f(t)+\frac{1}{K^2}  \ell(t)\big) \mathrm{d}t	\end{align}
	with
	\begin{equation*}
	\label{eq:utility_b}
	g(t) = \mathbb{E}[ \mathbf{1}_{N^{mm}_{t}>0}( \sum_{k = 1}^{N^{mm}_{t}}  \frac{ P_{\tau^{mm}_k} - P_{t }}{N^{mm}_{t}} )^2 ], \; f(t) = \mathbb{E}[\frac{\mathbf{1}_{N^{mm}_{t}>0}}{N^{mm}_{t}}],\; \text{ and } \ell(t) =\mathbb{E}[I_{\tau_1^{cl}}^2]  \mathbb{E}[\frac{\mathbf{1}_{N^{mm}_{t}>0}}{ {N^{mm}_{t}}^2  }].
\end{equation*}
A direct computation gives
\begin{equation}
\label{eq:proof_proba}
\mathbb{P}(N^{mm}_{\tau_1^{cl}}>0) = 1 - e^{-\mu h }\frac{\nu}{\nu + \mu}.
\end{equation}
We now turn to the computation of the function $g$. We have the following lemma.
\begin{lemma}\label{lemma_g} We have for any $t>0$	\begin{equation}
	\label{eq:utility_c}
	g(t) = \sigma_f^2 \frac{t^2}{2} \mu \mathbb{E}[  \frac{ 1}{ (N^{mm}_t + 1)^2} ] + \sigma_f^2\frac{t^3}{3}\mu^2  \mathbb{E}[\frac{ 1 }{(N^{mm}_t+ 2)^2}  ].
	\end{equation}
	
	\end{lemma}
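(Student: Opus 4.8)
The plan is to compute $g(t)$ directly by conditioning on the value of $N^{mm}_t$. Since $W$ is independent of $N^{mm}$, conditionally on $\{N^{mm}_t = n\}$ with $n\geq 1$ the jump times $(\tau^{mm}_1,\dots,\tau^{mm}_n)$ have the law of the order statistics of $n$ independent variables $U_1,\dots,U_n$ uniform on $[0,t]$, themselves independent of $W$; and because $\sum_{k=1}^{n}(P_{\tau^{mm}_k}-P_t)$ is symmetric in the $\tau^{mm}_k$, I may replace the order statistics by the $U_k$ themselves. Writing $P_{\tau^{mm}_k}-P_t=\sigma_f(W_{U_k}-W_t)$ (the constant $P_0$ cancels), this yields
$$
g(t)=\sigma_f^2\sum_{n\geq 1}\mathbb{P}(N^{mm}_t=n)\,\frac{1}{n^2}\,\mathbb{E}\Big[\Big(\sum_{k=1}^{n}(W_{U_k}-W_t)\Big)^2\Big].
$$

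Next I expand the square and integrate first over $W$, then over the $U_k$. The $n$ diagonal terms are each equal to $\mathbb{E}[\mathbb{E}[(W_{U_1}-W_t)^2|U_1]]=\mathbb{E}[t-U_1]=t/2$. For $k\neq\ell$, conditionally on $(U_k,U_\ell)$ the variables $W_{U_k}-W_t$ and $W_{U_\ell}-W_t$ are centered and jointly Gaussian, so the conditional expectation equals their conditional covariance $\min(U_k,U_\ell)-U_k-U_\ell+t=t-\max(U_k,U_\ell)$; using $\mathbb{E}[\max(U_k,U_\ell)]=2t/3$ for two independent uniforms on $[0,t]$, each of the $n(n-1)$ off-diagonal terms equals $t/3$. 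Hence $\mathbb{E}[(\sum_{k=1}^n(W_{U_k}-W_t))^2]=n\,t/2+n(n-1)\,t/3$, and
$$
g(t)=\sigma_f^2\,\frac{t}{2}\sum_{n\geq 1}\frac{\mathbb{P}(N^{mm}_t=n)}{n}+\sigma_f^2\,\frac{t}{3}\sum_{n\geq 1}\mathbb{P}(N^{mm}_t=n)\,\frac{n-1}{n}.
$$

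Finally I rewrite the two series as the moments appearing in \eqref{eq:utility_c}, using the Poisson identity $(\mu t)^{j}\,\mathbb{P}(N^{mm}_t=n)=(n+1)\cdots(n+j)\,\mathbb{P}(N^{mm}_t=n+j)$. With $j=1$ one gets $\mu t\,\mathbb{E}[(N^{mm}_t+1)^{-2}]=\sum_{n\geq 1}\mathbb{P}(N^{mm}_t=n)/n$, so the first series contributes $\sigma_f^2\frac{t}{2}\,\mu t\,\mathbb{E}[(N^{mm}_t+1)^{-2}]=\sigma_f^2\frac{t^2}{2}\mu\,\mathbb{E}[(N^{mm}_t+1)^{-2}]$. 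With $j=2$ one gets $(\mu t)^2\,\mathbb{E}[(N^{mm}_t+2)^{-2}]=\sum_{n\geq 2}\mathbb{P}(N^{mm}_t=n)\frac{n-1}{n}=\sum_{n\geq 1}\mathbb{P}(N^{mm}_t=n)\frac{n-1}{n}$ (the $n=1$ term being zero), so the second series contributes $\sigma_f^2\frac{t}{3}\,(\mu t)^2\,\mathbb{E}[(N^{mm}_t+2)^{-2}]=\sigma_f^2\frac{t^3}{3}\mu^2\,\mathbb{E}[(N^{mm}_t+2)^{-2}]$. Summing the two gives exactly \eqref{eq:utility_c}.

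The computation is essentially routine; the only step requiring care is the evaluation of the covariance of the Brownian increments at the random times $U_k$ — one must condition on the uniforms before invoking Gaussianity, and use that $\min(u_1,u_2)-u_1-u_2=-\max(u_1,u_2)$ — together with correctly distinguishing the $n$ diagonal from the $n(n-1)$ off-diagonal contributions when expanding the square.
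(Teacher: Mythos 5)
Your proof is correct, and it reaches the identity by a genuinely different route from the paper. Both arguments rest on the same decomposition: expand the square into the $n$ diagonal and the $n(n-1)$ cross terms and use that, for the Brownian efficient price, $\mathbb{E}[(P_u-P_t)(P_v-P_t)]=\sigma_f^2\,(t-\max(u,v))$ for $u,v\le t$. The paper then views the jump times as a Poisson scatter and applies Palm's (Slivnyak--Mecke) formula, which produces the factors $\mathbb{E}[(N^{mm}_t+1)^{-2}]$ and $\mathbb{E}[(N^{mm}_t+2)^{-2}]$ directly, multiplied by the integrals $\int_0^t(t-u)\,\mu\,\mathrm{d}u=\mu t^2/2$ and $\int_0^t\int_0^t(t-u)\wedge(t-v)\,\mu^2\,\mathrm{d}u\,\mathrm{d}v=\mu^2 t^3/3$. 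You instead condition on $N^{mm}_t=n$, use the conditional-uniformity of Poisson jump times (order statistics of i.i.d.\ uniforms, with symmetry letting you pass to unordered uniforms, and independence of $W$ from $N^{mm}$ preserved under this conditioning), compute the conditional second moment $n\,t/2+n(n-1)\,t/3$, and resum the resulting series via the shift identity $(\mu t)^{j}\,\mathbb{P}(N^{mm}_t=n)=(n+1)\cdots(n+j)\,\mathbb{P}(N^{mm}_t=n+j)$; the covariance step, $\mathbb{E}[\max(U_1,U_2)]=2t/3$, the $j=1,2$ identities and the vanishing $n=1$ term in the second series all check out, and the interchange of sum and expectation is harmless since the integrand is nonnegative. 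What each approach buys: yours is more elementary (no point-process machinery beyond conditional uniformity) and makes transparent why precisely the moments $\mathbb{E}[(N^{mm}_t+1)^{-2}]$ and $\mathbb{E}[(N^{mm}_t+2)^{-2}]$ appear, as resummations of $\sum_n p_n/n$ and $\sum_n p_n(n-1)/n$; the paper's Palm computation avoids any series manipulation and carries over verbatim to inhomogeneous intensities, which is what is needed for the cancellation extension of Appendix \ref{appendix:cancellation} and for Lemma \ref{lemma:poisson_related_formulas_b} stated at that level of generality.
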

	
	\begin{proof} Note that
	\begin{align*}
	g(t) = \sigma_f^2 \mathbb{E}\Big[ \mathbf{1}_{N^{mm}_{t}>0} \sum_{k = 1}^{N^{mm}_{t}}  \Big( \frac{ W_{\tau^{mm}_k}- W_{t}}{N^{mm}_{t}-1 + 1} \Big)^2   \Big] +\sigma_f^2 \mathbb{E}\Big[ \mathbf{1}_{N^{mm}_{t}>0} \sum_{k,l = 1\text{ s.t. }k\neq l}^{N^{mm}_{t}}   \frac{ (W_{\tau^{mm}_k} - W_{t })(W_{\tau^{mm}_l} - W_{t }) }{(N^{mm}_{t}-2 + 2)^2}\Big].
	\end{align*}
	Consider $X_t$ the Poisson scatter made of the event times of $N^{mm}$ between time $0$ and $t$. Then we have
	$$
	g(t) = \sigma_f^2 \mathbb{E}\Big[  \sum_{x \in X_t }  \frac{ (W_{x} - W_{t})^2}{ (\#\{X_t\backslash \{x\} \} + 1)^2}\Big] +\sigma_f^2 \mathbb{E}\Big[  \sum_{x,y \in X_t \text{ s.t. }x\neq y}  \frac{ (W_{x} - W_{t})(W_{y} - W_{t}) }{(\#\{X_t\backslash \{x, ~y\} \} + 2)^2} \Big].
	$$
	Since $P_t = \sigma_f W_t$ is independent of $N^{mm}$, we get
	$$
	g(t) = \sigma_f^2 \mathbb{E}\Big[  \sum_{x \in X_t }  \frac{ (t -x)^2}{ (\#\{X_t\backslash\{x\} \} + 1)^2}\Big] + \sigma_f^2 \mathbb{E}\Big[  \sum_{x,y \in X_t \text{ s.t. }x\neq y}  \frac{ (t - x) \wedge (t-y) }{(\#\{X_t\backslash \{x, ~y\} \} + 2)^2}\Big].	
	$$
	Finally using Palm's Formula, see for example \cite{coeurjolly2017tutorial}, we get
	\begin{equation*}
	g(t) = \sigma_f^2 \mathbb{E}\big[  \frac{ 1}{ (N^{mm}_t + 1)^2} \big]  \int_0^t (t-u)\mu \mathrm{d}u+ \sigma_f^2\mathbb{E}\big[\frac{ 1 }{(N^{mm}_t+ 2)^2}  \big]\int_0^t \int_0^t (t-u)\wedge (t-v)\mu^2 \mathrm{d}u \mathrm{d}v,
	\end{equation*}
and \eqref{eq:utility_c} follows.

\end{proof}

%
%
To compute explicitly $f$, $\ell$ and $g$ from Lemma \ref{lemma_g}, we need the following additional results.
\begin{lemma}
	\label{lemma:poisson_related_formulas_b}
	Let $N$ be a general inhomogeneous Poisson process with intensity measure $\lambda$. The following equalities hold:
		\begin{equation}\label{ineg_a_lem}
			\mathbb{E}[\frac{\mathbf{1}_{N_{t}>0}}{ N_t}]= e^{-m_t} \int_0^{m_t} \frac{e^{s} - 1}{s} \mathrm{d}s,\; \text{ and }\; 
			\mathbb{E}[\frac{\mathbf{1}_{N_{t}>0}}{ N^2_t}]= e^{-m_t} \int_0^{m_t} \frac{1}{s} \int_0^{s} \frac{e^{u} - 1}{u} \mathrm{d}u \mathrm{d}s,
		\end{equation}
		
	\begin{equation}\label{ineg_b_lem}
			\mathbb{E}[\frac{1}{ (1 + N_t)^2}]= \frac{e^{-m_t}}{m_t} \int_0^{m_t} \frac{e^s - 1}{s}\mathrm{d}s,\; \text{ and }\;
			\mathbb{E}[\frac{1}{ (2 + N_t)^2}] = \frac{1}{m_t^2
} \big( 1 - e^{-m_t} -e^{-m_t}\int_0^{m_t}\frac{e^s - 1}{s} \mathrm{d}s \big),
		\end{equation}
	with $m_t = \int_0^t \lambda(\mathrm{d}s).$

	\end{lemma}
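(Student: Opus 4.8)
The plan is to reduce everything to elementary manipulations of the Poisson series. Since $N_t$ is Poisson distributed with parameter $m_t$ (write $m=m_t$ for brevity), we have $\mathbb{P}(N_t=n)=e^{-m}m^n/n!$, and the single analytic ingredient needed is the expansion
$$
\frac{e^s-1}{s}=\sum_{k\geq 1}\frac{s^{k-1}}{k!},
$$
valid for all $s$, which upon term-by-term integration (legitimate by Tonelli, all terms being non-negative for $s>0$) gives $\int_0^m\frac{e^s-1}{s}\mathrm{d}s=\sum_{k\geq 1}\frac{m^k}{k\cdot k!}$, and, iterating once, $\int_0^m\frac{1}{s}\int_0^s\frac{e^u-1}{u}\mathrm{d}u\,\mathrm{d}s=\sum_{k\geq 1}\frac{m^k}{k^2\cdot k!}$.

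For the two identities in \eqref{ineg_a_lem} I would simply write $\mathbb{E}[\mathbf{1}_{N_t>0}/N_t]=e^{-m}\sum_{n\geq 1}m^n/(n\cdot n!)$ and $\mathbb{E}[\mathbf{1}_{N_t>0}/N_t^2]=e^{-m}\sum_{n\geq 1}m^n/(n^2\cdot n!)$, and recognize the two iterated integrals above. For \eqref{ineg_b_lem} the idea is to reindex. In $\mathbb{E}[(1+N_t)^{-2}]=e^{-m}\sum_{n\geq 0}m^n/((n+1)^2 n!)$, substituting $k=n+1$ and using $k^2(k-1)!=k\cdot k!$ turns the sum into $m^{-1}\sum_{k\geq 1}m^k/(k\cdot k!)=m^{-1}\int_0^m\frac{e^s-1}{s}\mathrm{d}s$, which is the first claim. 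In $\mathbb{E}[(2+N_t)^{-2}]=e^{-m}\sum_{n\geq 0}m^n/((n+2)^2 n!)$, substituting $k=n+2$ gives $e^{-m}m^{-2}\sum_{k\geq 2}m^k/(k^2(k-2)!)$; then the partial-fraction identity $\frac{k-1}{k^2}=\frac1k-\frac1{k^2}$, i.e. $\frac{1}{k^2(k-2)!}=\frac{1}{k!}-\frac{1}{k\cdot k!}$, splits the sum into $\sum_{k\geq 2}m^k/k!=e^m-1-m$ and $\sum_{k\geq 2}m^k/(k\cdot k!)=\int_0^m\frac{e^s-1}{s}\mathrm{d}s-m$; the $-m$ terms cancel and one is left with $e^{-m}m^{-2}\big(e^m-1-\int_0^m\frac{e^s-1}{s}\mathrm{d}s\big)$, which is exactly the stated formula.

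The computation is essentially bookkeeping, and the only place where care is needed is the last identity, where the two reindexings discard a few low-order terms ($m$ and $1+m$ respectively) that must be tracked so that they cancel correctly in the subtraction. Everything else is a routine term-by-term comparison of absolutely convergent series with the power series of $\int_0^m\frac{e^s-1}{s}\mathrm{d}s$; I would state once at the outset that all sum/integral interchanges are justified by non-negativity so that no convergence subtlety arises.
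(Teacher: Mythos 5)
Your proof is correct and follows essentially the same route as the paper: both express the expectations as Poisson series in $m_t$ and identify them with the stated (iterated) integrals through elementary power-series manipulations, justified by non-negativity. The only cosmetic difference is in the last identity, where you reindex and use the split $\frac{1}{k^2(k-2)!}=\frac{1}{k!}-\frac{1}{k\,k!}$, while the paper evaluates $\int_0^{x}\frac{1}{s}\int_0^{s}u e^{u}\,\mathrm{d}u\,\mathrm{d}s$ directly via the auxiliary functions $r_i,s_i$; both computations produce the same cancellation and the same final formula.
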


	\begin{proof}[Proof of \eqref{ineg_a_lem}.]
	Note that
	$$
	\mathbb{E}[\frac{\mathbf{1}_{N_{t}>0}}{ N_t}] = \sum_{n = 1}^{+\infty} \frac{1}{n}\frac{m_t^n}{n!}e^{-m_t}\text{ and }
	\mathbb{E}[\frac{\mathbf{1}_{N_{t}>0}}{ N_t^2}] = \sum_{n = 1}^{+\infty} \frac{1}{n^2}\frac{m_t^n}{n!}e^{-m_t}.
	$$
	The functions $e_1$ and $e_2$ defined by
	$$
	e_1(x) = \sum_{n = 1}^{+\infty} \frac{1}{n}\frac{x^n}{n!} \text{ and } 
	e_2(x) = \sum_{n = 1}^{+\infty} \frac{1}{n^2}\frac{x^n}{n!}
	$$
	are continuously differentiable function, so that
	$$
	e_1'(x) = \sum_{n = 1}^{+\infty} \frac{x^{n-1}}{n!} = \frac{e^{x}-1}{x} \text{ and } xe_2'(x) = \sum_{n = 1}^{+\infty} \frac{1}{n}\frac{x^{n}}{n!} = e_1(x).
	$$
	By integrating these functions, we get \eqref{ineg_a_lem}.\\
	
	\textit{Proof of \eqref{ineg_b_lem}.}
Note that
	$$
	\mathbb{E}[\frac{1}{ (1 + N_t)^2}] = \sum_{n = 0}^{+\infty} \frac{1}{(1 + n)^2} \frac{m_t^n}{n!}e^{-m_t}\text{ and }
	\mathbb{E}[\frac{1}{ (2 + N_t)^2}] = \sum_{n = 0}^{+\infty} \frac{1}{(2 + n)^2} \frac{m_t^n}{n!}e^{-m_t}.
	$$
	Consider, for $i>0$, the functions
	$$
	r_i(x)=\sum_{n=0}^{+\infty} \frac{1}{(i + n)^2}\frac{x^{n+i}}{n!} \text{ and } s_i(x)=\sum_{n=0}^{+\infty} \frac{1}{i + n}\frac{x^{n+i}}{n!}.
	$$
	We have
	$$
	r_i'(x)=\sum_{n=0}^{+\infty} \frac{1}{i + n}\frac{x^{n+i-1}}{n!}\text{ hence }r_i(x) = \int_0^{x} \frac{s_i(s)}{s} \mathrm{d}s.
	$$
	Since
	$$
	s_i'(x)=\sum_{n=0}^{+\infty} \frac{x^{n+i-1}}{n!} = x^{i-1} e^{x}\text{ we get } r_i(x) = \int_0^{x} \frac{1}{s}\int_0^s u^{i-1}e^u \mathrm{d}u \mathrm{d}s.
	$$
	Taking $i=1$ and $i=2$ we get \eqref{ineg_b_lem}.
	\end{proof}

Injecting Equations \eqref{ineg_a_lem} and \eqref{eq:proof_proba} into $f$ and $\ell$ and Equation \eqref{ineg_b_lem} into $g$ in view of \eqref{eq:utility_c}, using \eqref{eq:utility_a} we obtain the formulas stated in Theorem \ref{th:utility_function}.

\section{Computation of the expected square imbalance in the Poisson case}
\label{appendix_calculnaiveMT}

We want to compute $\mathbb E[I^2_{\tau_1^{op}+h}]$ when $N^a$ and $N^b$ are independent Poisson processes with intensity $\nu/2$. We have
\begin{align*}
\mathbb E[I^2_{\tau_1^{op}+h}]&=v^2\mathbb E[\big((N^a_{\tau_1^{op}+h}-N^a_{\tau_1^{op}}+N^a_{\tau_1^{op}} ) - (N^b_{\tau_1^{op}+h}-N^b_{\tau_1^{op}}+N^b_{\tau_1^{op}} )\big)^2 ].
\end{align*}
Using the strong Markov property of Poisson process and taking conditional expectation with respect to $\tau_1^{op}$ we get
\begin{align*}
\mathbb E[I^2_{\tau_1^{op}+h}]&= v^2(\nu h + 1),
\end{align*}
where we use $\mathbb E[N^a_{\tau_1^{op}}]= \mathbb E[(N^a_{\tau_1^{op}})^2]= 1/2$.
	\section{Existence of a Nash equilibrium}
	\label{appendix:existence_nash_equilibrium}
	In this section, we set $h>0$ as a terminal time of the auction to investigate the game played by the market takers.
\subsection{Nash equilibrium}
\label{subsubsec:nash_equilibrium}
We are interested in finding a Nash equilibrium to the game between buyers and sellers. Starting at $(N_0^a,N^b_0)=(\alpha,\beta)\in \mathbb N^2$, we set\footnote{Rigorously speaking we should write $V_0^{i,\alpha,\beta}(\lambda_a,\lambda_b,h)$ instead of $V_h^{i,\alpha,\beta}(\lambda_a,\lambda_b)$ with $i\in \{a,b\}$, since we define here the value function of each market taker at time $0$ and $h$ is a time horizon. Since we consider only value functions of market takers at time $0$, we make this slight abuse of notation.}
\begin{equation}
\label{pb:buyer}
V^{a,\alpha,\beta}_h(\lambda_a,\lambda_b)=  \mathbb{E}^{\mathbb{P}^{\lambda_a, \lambda_b}}[N^a_h(N^a_h - N^b_h)]
\end{equation} 
\begin{equation}
\label{pb:seller}
V^{b,\alpha,\beta}_h (\lambda_a,\lambda_b)=  \mathbb{E}^{\mathbb{P}^{\lambda_a, \lambda_b}}[N^b_h(N^b_h - N^a_h)].
\end{equation}
Formally, we can thus compute the optimal P\&L of market takers for buy orders and sell orders by solving the following coupled system
	\begin{equation}
	\label{eq:market_taker_game}
	\left\{
		\begin{array}{lll}
				\underset{\lambda^a\in \mathcal U}\inf V^{a,\alpha,\beta}_h(\lambda_a,\lambda^\star_b)&=&  \mathbb{E}^{\mathbb{P}^{\lambda_a^\star, \lambda_b^\star}}[N^a_h(N^a_h - N^b_h)]\\
				 \underset{\lambda^b\in \mathcal U}\inf V^{b,\alpha,\beta}_h (\lambda_a^\star,\lambda_b)&=&  \mathbb{E}^{\mathbb{P}^{\lambda_a^\star, \lambda_b^\star}}[N^b_h(N^b_h - N^a_h)]\\
		\end{array}			
	 \right.,
	\end{equation}
where $\lambda^\star_b$ and $\lambda^\star_b$ are simultaneous optimizers of \eqref{pb:buyer} and \eqref{pb:seller} respectively (depending on the action of market takers having the opposite behavior).\\

We now investigate theoretically the existence of a Nash equilibrium associated with \eqref{eq:market_taker_game}. First we introduce some notations.

\begin{itemize}
\item Let $\Omega$ be the set of piece-wise constant functions with jumps of size $1$. Consider\footnote{Here for the notation $\top$ denotes the transposition of a vector to identify  as usual any element of $\mathbb N^2$ with a column vector.} $X=(N^a,N^b)^\top$ be the canonical processes in $\Omega^2$ and $\mathbb{F} = (\mathcal{F}_s)_{0\leq s \leq h}$ the smallest filtration for which $X$ is adapted. 
\item Let $\mathbb{P}$ be a probability measure on $(\Omega^2,\mathcal F_h)$ such that
$$
M_s = X_s - s\mathcal L_0, ~\text{with}~ \mathcal L_0:=(\lambda_0, \lambda_0)^\top,\; 0<\lambda_0<\lambda_{+},\; s\in [0,h],
$$ is a local martingale. A proof of the existence of such measure $\mathbb P$ is given in \cite{jacod1975multivariate}.
We set $M^a_r:=M_{1,r}$ (resp. $M^b_r:=M_{2,r}$) the first (resp. the second) component of $M$.
Moreover to any pair $(\lambda^a, \lambda^b)\in \mathcal{U}^2$ of admissible controls we associate $\mathbb{P}^{\lambda^a, \lambda^b}$ the measure defined by
$$
\frac{\mathrm{d}\mathbb{P}^{\lambda^a, \lambda^b}}{\mathrm{d}\mathbb P} = \text{exp}\Big(\int_0^h \log\big(\frac{\lambda_s^{a}}{\lambda_0}\big) \mathrm{d}N^a_s - \big(\lambda_s^{a} - \lambda_0\big)\mathrm{d}s + \log\big(\frac{\lambda_s^{b}}{\lambda_0}\big) \mathrm{d}N^b_s - \big(\lambda_s^{b} - \lambda_0\big) \mathrm{d}s \Big).
$$
Hence, under the measure $\mathbb{P}^{\lambda^a, \lambda^b}$, 
$$
\Big(X_s - \int_0^s (\lambda^a_u, \lambda^b_u)^\top\mathrm{d}u\Big)_{0\leq s\leq h}
$$
is a martingale. 

\item For $(E, \|\cdot\|)$ a normed space, any $0\leq s\leq t\leq h$ and $p>1$, we define
\[\mathcal{H}^p_{s,t}(E) = \{Y, ~E-\text{valued and }\mathbb F-\text{adapted process s.t.}, \mathbb{E}[(\int_s^t \|Y_r\|^2 \mathrm{d}r)^{\frac{p}{2}}]<+\infty\}\]
\[\mathcal{S}^p_{s,t}(E) = \{Y, ~E-\text{valued and }\mathbb F-\text{adapted process s.t.}, \mathbb{E}[\underset{s\leq t}{\sup} \|Y_r\|^p \mathrm{d}r]<+\infty\}\]
\[\mathbb{L}^p(E) = \{\xi,~E-\text{valued }\mathcal{F}_h-\text{measurable random variable, s.t.}~ \mathbb{E}[\|\xi\|^p] <+\infty \}.\] 

When $s=0$ we omit the index $s$ in the previous definitions. If $E=\mathbb R^2$, we set $\|\cdot\|_2$ and $\|\cdot\|_1$ the classical Manhattan norm and Euclidean norm on $\mathbb R^2$ respectively. For any $\mathbb R^2-$valued process $Y:=(Y_r)_{0\leq r\leq h}$, we denote by $Y_{r,1}$ and $Y_{r,2}$ its first and second coordinates respectively for any time $r\in [0,h]$. 
\item For any $z\in \mathbb R^2$ and $\varepsilon^a,\varepsilon^b\in[\lambda_-,\lambda_+]$, we set 

$$
\textbf{(L)}\left\{
\begin{array}{ll}
\lambda_a^{\star}(z, \varepsilon^a) &= \mathbf{1}_{z_1 > 0} \lambda_{-} + \mathbf{1}_{z_1 < 0} \lambda_{+} + \varepsilon^a \mathbf{1}_{z_1= 0}\\
\lambda_b^{\star}(z, \varepsilon^b) &= \mathbf{1}_{z_2 > 0} \lambda_{-} + \mathbf{1}_{z_2 < 0} \lambda_{+} + \varepsilon^b \mathbf{1}_{z_2= 0}.
\end{array}
\right.
$$ 

Note that both $z_1\lambda^{ \star}_a(z, \varepsilon^a)$ and $z_2\lambda^{ \star}_b(z, \varepsilon^b)$ do not depend on $\varepsilon^a$ and $\varepsilon^b$. To alleviate notations, when one of these products appears, we will denote it simply by $z_1\lambda^{ \star}_a(z)$ and $z_2\lambda^{ \star}_b(z)$ respectively.
\item For any $z,\tilde z\in \mathbb R^2$ and any $\varepsilon \in [\lambda_-,\lambda_+]$, we set $H^{a,\star}(z, \tilde z, \varepsilon)  = z_1\lambda_a^{\star}(z) + z_2\lambda_{b}^{\star}(\tilde z, \varepsilon)$ and $H^{b,\star}(z, \tilde z, \varepsilon)  = z_2\lambda_b^{\star}(z) + z_1\lambda_{a}^{\star}(\tilde z, \varepsilon)$.
\item for $x\in \mathbb{N}^2$ we define $g^a(x) = x_1(x_1 - x_2)$ and $g^b(x) = x_2(x_2 - x_1)$.
\item Let $U$ be a map from $[0,h]\times\mathbb{N}^2$ into $\mathbb R$. For any $(s,\alpha,\beta)\in [0,h]\times \mathbb{N}^2$ we set

$$
\mathbf{(D)}\left\{
\begin{array}{ll}
D_a U(s,\alpha, \beta) =& U(s,\alpha+1, \beta) - U(s,\alpha, \beta) \\
D_b U(s,\alpha, \beta) =& U(s,\alpha, \beta+1) - U(s,\alpha, \beta) \big)\\
DU(s,\alpha, \beta)=&(D_a U(s,\alpha, \beta),D_b U(s,\alpha, \beta))^\top.
\end{array}
\right. $$

\end{itemize}

We first provide a very general result by associated to the existence of a Nash equilibrium for \eqref{eq:market_taker_game} a system of coupled ODE on $\mathbb N^2$, as a direct extension of \cite[Theorem 8.5]{dockner2000differential}.

\begin{proposition}
\label{th:nash_equilibrium_bangbang}
Assume that there exist two maps $\varepsilon^a,\varepsilon^b$ from $[0,h]\times \mathbb{N}^2$ into $[\lambda_-,\lambda_+]$ such that the following coupled system 

$$
\mathbf{(S)}\left\{
\begin{array}{ll}
\partial_s V^a+H^{a,\star}(DV^a,DV^b,\varepsilon^b)=0,& s\in [0,h),\; (\alpha, \beta)\in \mathbb{N}^2\\
V^a(h,\alpha, \beta)=g^a(\alpha, \beta),&\\
\partial_s V^b+H ^{b,\star}(DV^b,DV^a,\varepsilon^a)=0,& s\in [0,h),\; (\alpha, \beta)\in \mathbb{N}^2\\
V^b(h,\alpha,\beta)=g^b(\alpha,\beta),&
\end{array}
\right.
$$
has a continuously differentiable (in time) solution denoted by $(V^a,V^b)$ on $[0,h]\times \mathbb{N}^2$ and assume moreover that 
$$DV^i(\cdot,N^a_\cdot,N^b_\cdot) \in \mathcal H_h^2(\mathbb R^2),\; i=a,b.$$ Then, $(\lambda^{\star}_a(D V^a,\varepsilon^a),\lambda^{\star}_b(D V^b,\varepsilon^b))$ is a Nash equilibrium for \eqref{eq:market_taker_game}.

\end{proposition}

\begin{proof}
The proof follows a standard verification argument. Notice however that we need feedback control for the thresholds $(\varepsilon^a,\varepsilon^b)$ in order to have classical HJB equations. See for instance \cite[Theorem 8.5]{dockner2000differential}.
\end{proof}
Although the previous result provides sufficient conditions to get a Nash equilibrium for the stochastic differential game \eqref{eq:market_taker_game}, it is quite hard to justify such existence in practice. Note indeed that the optimizers $\lambda^{\star}_a$ and $\lambda^{\star}_b$ are singular in view of their definition \textbf{(L)}. Thus, the main difficulty encountered in this proposition is to solve the bang-bang type system \textbf{(S)} of ODEs on $\mathbb{N}^2$ for relevant thresholds $\varepsilon^a,\varepsilon^b$. As far as we now, we have no PDE results ensuring the existence of a solution to \textbf{(S)}.\\

Inspired by \cite{hamadene2014bang}, we thus propose to study a smooth approximation of \textbf{(S)} and then to build a sequence of processes converging (up to a subsequence) to a Nash equilibrium for the game \eqref{eq:market_taker_game}.\\

Let $n\in \mathbb N$. We consider the smoothed control functions for any $z\in\mathbb R$
$$
\lambda^{n}(z) = \left\{ \begin{array}{lll}
\lambda_{+}~ &\text{if } z \leq -\frac{1}{n}\\
\lambda_{-}~ &\text{if } z \geq \frac{1}{n}\\
n\frac{\lambda_- - \lambda_+}{2}z + \frac{\lambda_+ + \lambda_-}{2}~ &\text{if }z \in (-\frac{1}{n}, \frac{1}{n}).
\end{array}\right.
$$
The functions $\lambda^{n}$ and $z\longmapsto z\lambda^{n}(z)$ are Lipschitz continuous. Also consider $\Phi_n$, the truncation function defined for any $x\in \mathbb R$ by
$$
\Phi_n(x) = (x \wedge n)\vee (-n).
$$

Hence, we introduction the smoother of $H^\star$ denoted by $H^{\star,n}$ and defined by for any $(z_1,z_2,\tilde z)\in \mathbb R^3$ by

$$ H^{\star,n}(z_1,z_2,\tilde z)= \Phi_n(z_1\lambda^{n}(z_1)) + \Phi_n(z_2)\lambda^{n}(\tilde z).$$

\begin{theorem}\label{thm:edpapproach}
For any $n\in \mathbb N$, there exists a unique (viscosity) solution denoted by $V^{a,n}$ to the following system of integro-PDEs

$$
\mathbf{(S^n)}\left\{
\begin{array}{ll}
\partial_s V^{a,n}+H^{\star,n}(D_aV^{a,n},D_bV^{a,n}, D_bV^{b,n})=0,\; s\in [0,h),\, (\alpha,\beta)\in \mathbb{N}^2,&\\
V^{a,n}(h,\alpha, \beta)=g^a(\alpha, \beta),&\\
\partial_s V^{b,n}+H ^{\star,n}(D_bV^{b,n},D_aV^{b,n}, D_aV^{a,n})=0,\,  s\in [0,h),\, (\alpha,\beta)\in \mathbb{N}^2,&\\
V^{b,n}(h,\alpha,\beta)=g^b(\alpha, \beta).&
\end{array}
\right.
$$
Moreover,
\begin{itemize}
\item The system $(\mathbf{S^n})$ admits a unique viscosity solution.
\item There exists a subsequence $(n_k)_{k\geq 0}$ and two measurable applications $V^a,V^b$ from $[0,h]\times \mathbb{N}^2$ into $\mathbb R$ such that for any $(s,\alpha,\beta)\in [0,h]\times \mathbb{N}^2$ 
$$\lim\limits_{k\to+\infty} V^{i,n_k}(s,\alpha,\beta)=V^i(s,\alpha,\beta), \; i\in\{ a,b\}$$
and 
$$\lim\limits_{n\to+\infty} DV^{i,n}(s,\alpha,\beta)=DV^i(s,\alpha,\beta),\; i\in\{a,b\}.$$
\item Moreover $\lambda^{n_k}(D_aV^{a,n_k}(\cdot,N^a,N^b))\mathbf 1_{D_aV^{a}(\cdot,N^a,N^b)=0}$ and $\lambda^{n_k}(D_bV^{b,n_k}(\cdot,N^a,N^b))\mathbf 1_{D_bV^{b}(\cdot ,N^a,N^b)=0}$ converges weakly in $\mathcal H_h^2(\mathbb R^2)$ to some progressively measurable and $[\lambda_-,\lambda^+]$-valued processes denoted respectively by $\theta$ and $\vartheta$. 
\end{itemize}
Thus, $(\lambda^{\star}_a,\lambda^\star_b)=(\lambda^{\star}_a(DV^a(s,N_s^a,N_s^b),\theta_s),\lambda^{\star}_b(DV^b(s,N_s^a,N_s^b),\vartheta_s))_{0\leq s\leq t}$ is a Nash equilibrium for the game \eqref{eq:market_taker_game} and $V_h^{i, \alpha, \beta}(\lambda^{\star}_a,\lambda^{\star}_b)=V^i(0,\alpha,\beta),\; i\in\{a,b\}$.
\end{theorem}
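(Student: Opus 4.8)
The plan is to follow the smooth-approximation scheme of \cite{hamadene2014bang}, transposed from a Brownian to a pure-jump setting: for each $n$ solve the regularized system $\mathbf{(S^n)}$; derive a priori bounds uniform in $n$; extract converging subsequences; identify the limiting value functions together with the threshold processes $\theta,\vartheta$; and finally run a verification argument showing that the induced feedback controls form a Nash point of \eqref{eq:market_taker_game} with value $V^i(0,\alpha,\beta)$. \textbf{Step 1 (the regularized system).} Since $\lambda^n$, $z\mapsto z\lambda^n(z)$ and $\Phi_n$ are Lipschitz and $\Phi_n$ is bounded, the map $H^{\star,n}$ is globally Lipschitz. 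Reading $\mathbf{(S^n)}$ as a denumerable coupled system of time-ODEs on the lattice $\mathbb{N}^2$ — the operators $D_a,D_b$ only couple a state to its two forward neighbours — a Picard iteration in a polynomially weighted space of time-continuous families $(V(s,\cdot))_{s\in[0,h]}$ gives a unique solution $(V^{a,n},V^{b,n})$ of at most quadratic growth in $(\alpha,\beta)$, while a comparison principle for Lipschitz Hamiltonians gives uniqueness in the viscosity sense. Equivalently, under the reference measure $\mathbb{P}$ the processes $Y^{i,n}_s:=V^{i,n}(s,N^a_s,N^b_s)$ and $Z^{i,n}_s:=DV^{i,n}(s,N^a_{s-},N^b_{s-})$ solve a Lipschitz BSDE driven by the compensated Poisson martingales $(M^a,M^b)$, so that $Y^{i,n}\in\mathcal S^2_h$ and $Z^{i,n}\in\mathcal H^2_h(\mathbb R^2)$.

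\textbf{Step 2 (uniform estimates and compactness).} Because all admissible intensities lie in $[\lambda_-,\lambda_+]$ and $g^a,g^b$ are quadratic, comparison with the controlled dynamics gives $|V^{i,n}(s,\alpha,\beta)|\le C(1+\alpha^2+\beta^2)$ uniformly in $n$ and $s$; since the processes of interest start from $(1,0)$ or $(0,1)$ and have bounded intensity over the finite horizon $[0,h]$, the cutoff $\Phi_n$ is inactive outside events of vanishing probability, which yields a uniform bound $\|Z^{a,n}\|_{\mathcal H^2_h}+\|Z^{b,n}\|_{\mathcal H^2_h}\le C$. The $n$-uniform bound on $\partial_s V^{i,n}$ (a consequence of the Lipschitz structure of $H^{\star,n}$ and of the gradient estimates) provides time-equicontinuity, and a BSDE stability estimate between two levels controls the oscillation of $Y^{i,n}$. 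I can then extract a subsequence $(n_k)$ along which $V^{i,n_k}(s,\alpha,\beta)$ and $DV^{i,n_k}(s,\alpha,\beta)$ converge pointwise to measurable limits $V^i,DV^i$, $Y^{i,n_k}\to Y^i$ in $\mathcal S^2_h$, and $Z^{i,n_k}\to Z^i$ weakly in $\mathcal H^2_h(\mathbb R^2)$ with $Z^i_s=DV^i(s,N^a_{s-},N^b_{s-})$.

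\textbf{Step 3 (threshold identification and verification).} On the set $\{D_aV^a(\cdot,N^a,N^b)=0\}$ the bounded sequence $\lambda^{n_k}(D_aV^{a,n_k}(\cdot,N^a,N^b))$ need not converge strongly, but has a weak $\mathcal H^2_h$-limit $\theta$ valued in $[\lambda_-,\lambda_+]$, and likewise $\vartheta$ for Player $b$. Using $\mathbf{(L)}$ — recall that $z_1\lambda_a^\star(z)$ and $z_2\lambda_b^\star(z)$ do not depend on the threshold — and passing to the limit in the BSDEs of Step 1, one identifies $(Y^a,Z^a)$ (resp. $(Y^b,Z^b)$) as the solution of the BSDE with generator $H^{a,\star}(DV^a,DV^b,\vartheta)$ (resp. $H^{b,\star}(DV^b,DV^a,\theta)$), i.e. a generalized solution of $\mathbf{(S)}$. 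For verification, fix Player $a$'s control $\lambda^a\in\mathcal U$ arbitrarily and Player $b$'s at $\lambda_b^\star(DV^b(s,N_s),\vartheta_s)$: then $V^{a,\alpha,\beta}_h(\lambda^a,\lambda_b^\star)=\mathbb E^{\mathbb P^{\lambda^a,\lambda_b^\star}}[g^a(N^a_h,N^b_h)]$ is the value at time $0$ of a linear BSDE whose generator is $z\mapsto z_1\lambda^a+z_2\lambda_b^\star(DV^b,\vartheta)$, which dominates the generator of $Y^a$ since $z_1\lambda^a\ge z_1\lambda_a^\star(z)=\inf_{\lambda\in[\lambda_-,\lambda_+]}z_1\lambda$; hence by comparison for linear BSDEs with jumps (equivalently, a Girsanov change of measure) $V^{a,\alpha,\beta}_h(\lambda^a,\lambda_b^\star)\ge V^a(0,\alpha,\beta)$, with equality at $\lambda^a=\lambda_a^\star(DV^a(s,N_s),\theta_s)$. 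The symmetric argument for Player $b$ yields the announced Nash equilibrium and $V_h^{i,\alpha,\beta}(\lambda_a^\star,\lambda_b^\star)=V^i(0,\alpha,\beta)$, the $\mathcal H^2_h$ bounds of Step 2 supplying the integrability needed to discard the martingale parts.

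\textbf{Main obstacle.} The crux is Step 3: showing that the weak limits of the regularized bang-bang controls on the singular sets $\{DV^i=0\}$ can be realized through admissible feedback thresholds $\theta,\vartheta$, and that the merely measurable limiting value functions still support a verifiable Nash point — the jump-process counterpart of the degeneracy handled in \cite{hamadene2014bang}, here made more delicate by the unbounded discrete gradients of $g^a,g^b$, which forces careful bookkeeping of the inactivity of the cutoff $\Phi_n$ in the a priori estimates of Step 2.
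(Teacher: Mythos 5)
Your plan follows essentially the same route as the paper: smooth the bang--bang Hamiltonians as in \cite{hamadene2014bang}, represent $(\mathbf{S^n})$ by a Lipschitz BSDE driven by the compensated Poisson martingales (the paper invokes \cite{buckdahn1994bsde} and \cite{barles1997backward} for existence, uniqueness and the identification $Z^{i,n}_u=DV^{i,n}(u,X_{u^-})$), derive $n$-uniform quadratic-growth estimates by comparison and a change of measure, extract a subsequence, read off the thresholds $\theta,\vartheta$ as weak limits on the singular sets, and verify the Nash property by BSDE comparison (the paper cites the jump analogue of Proposition 2.4 of \cite{hamadene2014bang}). Your compactness device (an $n$-uniform bound on $\partial_s V^{i,n}$ for each fixed $(\alpha,\beta)$, Arzel\`a--Ascoli and a diagonal extraction over the countable lattice) is a legitimate variant of the paper's argument, which instead shows that $(V^{i,n}(s,x))_n$ is Cauchy using weak $L^2$ convergence of the generators and a density-ratio bound between the laws of $X^{s,x}$ and $X^{s,(0,0)}$.

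There is, however, a genuine gap exactly where the paper does the real work. You claim only weak convergence of $Z^{i,n_k}$ in $\mathcal H^2_h(\mathbb R^2)$, and $\theta,\vartheta$ are themselves only weak limits; with two merely weakly convergent factors you cannot pass to the limit in the product terms $\Phi_{n}(Z^{a,n}_{2,\cdot})\lambda^{n}(Z^{b,n}_{2,\cdot})$ of the generator, so the assertion ``passing to the limit in the BSDEs of Step 1, one identifies $(Y^a,Z^a)$ as the solution of the BSDE with generator $H^{a,\star}(DV^a,DV^b,\vartheta)$'' is unsupported as written --- and this identification is precisely the content of the theorem concerning $\theta,\vartheta$. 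The paper closes this in two steps that you omit: (i) it upgrades to strong convergence of $Z^{i,n}$ in $\mathcal H^2_h$ by proving the sequence is Cauchy, via It\^o's formula applied to $|Y^{i,n}-Y^{i,m}|^2$ together with the weak convergence of the generators and the uniform estimates (your pointwise convergence of $DV^{i,n_k}$ plus the uniform polynomial bounds and moment estimates on $X$ would also give this by dominated convergence, but you would have to say so --- ``weak'' does not suffice); and (ii) on the singular set $\{Z^{b}_{2}=0\}$ it proves, for every stopping time $\tau$, weak $\mathbb L^2$ convergence of $\int_s^{\tau}\Phi_n(Z^{a,n}_{2,u})\lambda^n(Z^{b,n}_{2,u})\mathbf 1_{Z^{b}_{2,u}=0}\,\mathrm du$ to $\int_s^{\tau}Z^{a}_{2,u}\vartheta_u\mathbf 1_{Z^{b}_{2,u}=0}\,\mathrm du$, using the martingale representation of an arbitrary $\mathcal F_h$-measurable test variable and a truncation argument; this is what lets the limiting equation be identified indistinguishably before running the verification. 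You explicitly flag this as the ``main obstacle'' but offer no argument for it, so the proposal is incomplete at its crux. (A minor further point: the claim that the cutoff $\Phi_n$ is ``inactive outside events of vanishing probability'' is heuristic --- $N_h$ is unbounded --- and is not what yields the uniform $\mathcal H^2$ bound; that bound follows from the quadratic-growth estimate on $V^{i,n}$, which the paper obtains through an auxiliary comparison BSDE and a change of measure, as you also sketch.)
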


We give here the sketch of the proof of this result. The details are postponed to Appendix \ref{appendix_sec:proof_theorem}. 

\paragraph{Sketch of the proof of Theorem \ref{thm:edpapproach}}
The proof will be divided in three steps. The main tool used is the theory of BSDE with jumps (see \cite{tang1994necessary,buckdahn1994bsde,barles1997backward}) and their representations through integro-partial differential equations.

\begin{itemize}
\item[Step 1.] We associated to the system $\mathbf{(S^n)}$ a two dimensional BSDE for which it is well-known that there exists a unique solutions in appropriate spaces.
\item[Step 2.] By mimicking the proof of Theorem 2.5 in \cite{hamadene2014bang} extended to the case of counting processes, we prove that the solution of the BSDE associated to $\mathbf{(S^n)}$ converges up to a subsequence to a solution of a two-dimensional BSDE associated with the system $\mathbf{(S)}$.
\item[Step 3.] We prove that this approximation provides a Nash equilibrium for the game \eqref{eq:market_taker_game} with well-chosen thresholds obtained in Step 2 as limits of functions of the components of the solution to the approached BSDE considered, see Proposition \ref{th:nash_equilibrium} below.\\

We conclude thanks to semi-linear Feynman-Kac formula for BSDEs and the system $\mathbf{(S^n)}$ established in Step 1, together with convergence results.
\end{itemize}

\subsection{Proof of Theorem \ref{thm:edpapproach}}
\label{appendix_sec:proof_theorem}
For the proof we follow the methodology of  \cite{hamadene2014bang}. First we introduce a series of smoothed BSDE with Lipschitz generator by smoothing the controls $\lambda^{\star}_a, ~\lambda^{\star}_b$. Then we show that the solution of the smoothed BSDE converges (up to a subsequence) almost surely towards a solution of Equation \eqref{eq:bsde}.\\ 

\noindent We have the following a priori estimates results which is a consequence of the BDG inequalities and of the Gronwall Lemma.
\begin{lemma}
\label{lemma:apriori_estimates}
For $(s, x)\in [0,h] \times \mathbb{N}^2 $ let $X^{s, x}$ be the process in $\Omega$ defined onto $[s, h]$ by
$$
X^{s, x}_u = x + X_{u} - X_s.
$$
We have for any $s\in [0,h]$ and $\rho>0$
$$
\mathbb{E}[\underset{s\leq u\leq h}{\sup}\|X^{s, x}_u \|^{\rho}_1] \leq C_{\rho}(1 + |x_1|^{\rho} + |x_2|^{\rho})
$$
and for any $(\lambda_a, \lambda_b)\in \mathcal{U}^2$
$$
\mathbb{E}^{\mathbb{P}^{\lambda_a, \lambda_b}}[\underset{s\leq u\leq h}{\sup}\|X^{s, x}_u \|^{\rho}_1] \leq C_{\rho}(1 + |x_1|^{\rho} + |x_2|^{\rho})
$$
\end{lemma}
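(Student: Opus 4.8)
The plan is to treat the two asserted bounds in parallel, since the only difference is the underlying measure, and for the second bound one reduces to the first via Girsanov together with a uniform integrability argument on the density. First I would fix $(s,x)\in[0,h]\times\mathbb N^2$ and observe that by definition $X^{s,x}_u = x + X_u - X_s$, so pointwise $\|X^{s,x}_u\|_1 \le \|x\|_1 + \|X_u - X_s\|_1$, and $\|X_u-X_s\|_1 = (N^a_u - N^a_s) + (N^b_u - N^b_s)$ is nondecreasing in $u$. Hence $\sup_{s\le u\le h}\|X^{s,x}_u\|_1 \le \|x\|_1 + (N^a_h - N^a_s) + (N^b_h - N^b_s)$, and it suffices to bound $\mathbb E[\big((N^a_h-N^a_s)+(N^b_h-N^b_s)\big)^\rho]$ by a constant $C_\rho$ depending only on $\rho$ (and $h,\lambda_0$, or $\lambda_+$). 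Under $\mathbb P$ the increment is the sum of two independent Poisson variables with parameter $\lambda_0(h-s)\le \lambda_0 h$, so all its moments are bounded uniformly in $s$; the elementary inequality $(a+b)^\rho \le 2^{\rho-1}(a^\rho+b^\rho)$ (for $\rho\ge 1$; for $\rho<1$ one uses subadditivity) then gives the first estimate with a constant of the required form. To make the constant genuinely $\rho$-dependent only, one can alternatively write the increment as a time-changed standard Poisson process and invoke the Burkholder–Davis–Gundy inequality for the compensated jump martingale together with the fact that the compensator is bounded by $2\lambda_0 h$, which is exactly the ``BDG + Gronwall'' route the statement alludes to.

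For the second estimate, under $\mathbb P^{\lambda_a,\lambda_b}$ the intensities of $N^a,N^b$ are the predictable processes $\lambda^a,\lambda^b$, both valued in $[\lambda_-,\lambda_+]$. I would first note the crude domination: since the intensities are bounded above by $\lambda_+$, the counting process $N^a + N^b$ under $\mathbb P^{\lambda_a,\lambda_b}$ is stochastically dominated by a Poisson process of intensity $2\lambda_+$ (one can see this by a pathwise thinning coupling, or by comparing compensators and using that $(N^a_u+N^b_u) - \int_0^u(\lambda^a_r+\lambda^b_r)\,dr$ is a $\mathbb P^{\lambda_a,\lambda_b}$-martingale). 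Applying BDG to this compensated martingale on $[s,h]$, with quadratic variation equal to $N^a_h - N^a_s + N^b_h - N^b_s$ and compensator bounded by $2\lambda_+(h-s)\le 2\lambda_+ h$, and then Gronwall (or simply Jensen on the bounded compensator) yields $\mathbb E^{\mathbb P^{\lambda_a,\lambda_b}}[(N^a_h-N^a_s+N^b_h-N^b_s)^\rho]\le C_\rho$ uniformly in $(\lambda_a,\lambda_b)\in\mathcal U^2$ and in $s$. Combining with the deterministic bound $\sup_{s\le u\le h}\|X^{s,x}_u\|_1^\rho \le 2^{(\rho-1)_+}(\|x\|_1^\rho + (N^a_h-N^a_s+N^b_h-N^b_s)^\rho)$ and $\|x\|_1^\rho \le 2^{(\rho-1)_+}(|x_1|^\rho+|x_2|^\rho)$ gives the claimed inequality after absorbing constants.

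The only genuinely delicate point is making the constant in the second bound \emph{uniform over all admissible controls} and over $s$; everything else is the two elementary inequalities above and a standard BDG estimate. The cleanest way around the obstacle is the stochastic-domination observation, which decouples the bound entirely from the particular control $(\lambda_a,\lambda_b)$: once $N^a+N^b$ under any $\mathbb P^{\lambda_a,\lambda_b}$ is dominated by a fixed Poisson$(2\lambda_+)$ process, the moment bound is immediate and control-free. If one prefers to stay within the BSDE/Girsanov formalism rather than invoke a coupling, the alternative is to bound the moments of the Radon–Nikodym density $\frac{d\mathbb P^{\lambda_a,\lambda_b}}{d\mathbb P}$ uniformly (which holds because $\log(\lambda^i_s/\lambda_0)$ and $\lambda^i_s-\lambda_0$ are bounded, so the stochastic exponential has bounded $L^p$ norms for every $p$), and then pass from the $\mathbb P$-estimate to the $\mathbb P^{\lambda_a,\lambda_b}$-estimate by Hölder's inequality with a slightly larger exponent — this is the route actually suggested by the placement of the lemma in the BSDE section, and it too produces a constant depending only on $\rho$ (and the fixed data $h,\lambda_\pm,\lambda_0$).
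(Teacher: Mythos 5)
Your proposal is correct, and in fact more detailed than the paper, which offers no written proof of this lemma beyond the one-line remark that it ``is a consequence of the BDG inequalities and of the Gronwall Lemma.'' Your route is also genuinely more elementary than the one the paper alludes to: since $N^a$ and $N^b$ are nondecreasing, $\sup_{s\le u\le h}\|X^{s,x}_u\|_1$ is attained at $u=h$, so no maximal inequality (BDG) is needed at all, and Gronwall is likewise superfluous because the intensities are bounded by constants rather than state-dependent. Under $\mathbb P$ the increments are Poisson$(\lambda_0(h-s))$ and the moment bound is immediate; under $\mathbb P^{\lambda_a,\lambda_b}$ your key point --- uniformity of the constant over all admissible controls --- is handled correctly either by the domination argument (a counting process with predictable intensity bounded by $\lambda_+$ has all moments of its increments bounded by those of a Poisson$(\lambda_+(h-s))$ variable, which one can justify cleanly via the exponential martingale $\exp\bigl(\theta N_t-\int_0^t\lambda_r(e^{\theta}-1)\,\mathrm{d}r\bigr)$ rather than only by ``comparing compensators'') or by your alternative of bounding the $L^p$ norms of the Girsanov density uniformly (valid since $\lambda_-\le\lambda^i\le\lambda_+$ with $\lambda_->0$, so the density is dominated by $\exp\bigl(L(N^a_h+N^b_h)+C h\bigr)$, which has all moments under $\mathbb P$) and applying H\"older, at the harmless cost of a larger exponent inside the first estimate. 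Both variants give a constant depending only on $\rho$, $h$, $\lambda_0$, $\lambda_\pm$, which is exactly what the lemma requires; the paper's BDG--Gronwall phrasing would reach the same conclusion by bounding the compensated martingale, but buys nothing extra here.
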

We now turn to the proof of Theorem \ref{thm:edpapproach}. 
\subsubsection{Step 1: Approximation, existence and uniqueness}
\label{appendix_subsec:step_1}
From now, $s\in [0,h)$. We recall the definition of smoothed control functions
$$
\lambda^{n}(z) = \left\{ \begin{array}{lll}
\lambda_{+}~ &\text{if } z \leq -\frac{1}{n}\\
\lambda_{-}~ &\text{if } z \geq \frac{1}{n}\\
n\frac{\lambda_- - \lambda_+}{2}z + \frac{\lambda_+ + \lambda_-}{2}~ &\text{if }z \in (-\frac{1}{n}, \frac{1}{n})
\end{array}\right.
$$
Consider $\Phi_n$, the truncation function
$$
\Phi_n(x) = (x \wedge n)\vee (-n).
$$
Now we define the system of smoothed BSDEs for any $u\in [s,h]$:
$$\mathbf{(J^n)}
\left\{ \begin{array}{rl}
- \mathrm{d}Y^{a, n;s,x}_u& = (H^{\star,n}(Z^{a, n;s, x}_{1, u},Z^{a, n;s, x}_{2, u},Z^{b, n;s, x}_{2,u})-\mathcal L_0\cdot Z^{a, n;s, x}_u) \mathrm{d}u - Z^{a, n;s, x}_u\cdot \mathrm{d}M_u,\\
 Y^{a, n;s, x}_{h}& = g^a(X_h^{s,x}) \\
- \mathrm{d}Y^{b, n;s,x}_u& = (H^{\star,n}(Z^{b, n;s, x}_{2, u},Z^{b, n;s, x}_{1, u},Z^{a, n;s, x}_{1,u}) -\mathcal L_0\cdot Z^{b, n;s, x}_u) \mathrm{d}u - Z^{b, n;s, x}_u\cdot \mathrm{d}M_u,\\
 Y^{b, n;s, x}_{h}& = g^b(X_h^{s,x}),
\end{array}\right.
$$
with $Z^{i, n;s, x}_{u}=(Z^{i, n;s, x}_{1, u},Z^{i, n;s, x}_{2, u})^\top$ for any $i\in \{a,b\}$.\\

From Proposition 2.1. in \cite{buckdahn1994bsde} since $\Phi_n$ is Lipschitz continuous there exists a unique solution to $\mathbf{(J^n)}$ such that
$$\big((Y^{a, n;s, x}, Z^{a, n;s, x}), (Y^{b, n;s, x}, Z^{b, n;s, x})\big)\in \big(\mathcal{S}^2_{s,h}(\mathbb{R})\times \mathcal{H}^2_{s,h}(\mathbb{R}^2)\big)^2.$$ Moreover (Proposition 3.8. in \cite{buckdahn1994bsde}) there exist measurable deterministic functions $V^{a,n},~ V^{b,n}$  defined on $[s, h]\times \mathbb{N}^2$ with values in $\mathbb{R}$ such that:
\begin{equation}\label{feynmankac}
\forall u \in [s, h], ~ Y^{i, n;s, x}_u = V^{i, n}(s, X^{s, x}_u)~ \text{and} ~ Z^{i, n;s, x}_u = DV^{i, n}(u, X^{s, x}_{u^{-}}), ~ \text{for}~ i = a, b.
\end{equation}

From Theorem 3.4. in \cite{barles1997backward}, we know that the unique solution of $\mathbf{(J^{n})}$ provides a unique viscosity solution denoted by $(V^{a,n},V^{b,n})$ to $\mathbf{(S^n)}$ and given by \eqref{feynmankac}.\\

Before going to the convergence of $Y^{i,n}$ and $Z^{i,n}$, notice that by considering the generator functions
$$
\left\{ \begin{array}{ll}
H^{a, n}(u, x) &= \big( \Phi_n(D_aV^{a, n}(u, x)\lambda^{n}(D_aV^{a, n}(u, x))) + \Phi_n(D_bV^{a, n}(u, x))\lambda^{n}(D_bV^{b, n}(u, x)) \big)\\
H^{b, n}(u, x) &= \big( \Phi_n(D_b V^{b, n}(u, x)\lambda^{n}(D_bV^{b, n}(u, x))) + \Phi_n(D_aV^{b, n}(u, x))\lambda^{n}(D_aV^{a, n}(u, x)) \big),
\end{array}\right.
$$
we deduce from \eqref{feynmankac} that \[H^{a, n}(u, X^{s, x}_{u^-})=H^{\star,n}(Z^{a, n;s, x}_{1, u},Z^{a, n;s, x}_{2, u},Z^{b, n;s, x}_{2,u}),\]
and \[H^{b, n}(u, X^{s, x}_{u^-})= H^{\star,n}(Z^{b, n;s, x}_{2, u},Z^{b, n;s, x}_{1, u},Z^{a, n;s, x}_{1,u}), \]
so that $\mathbf{(J^n)}$ becomes
$$\mathbf{(\widetilde{J^{n}})}
\left\{ \begin{array}{ll}
- \mathrm{d}Y^{a, n;s,x}_u &= (H^{a, n}(s, X^{s, x}_{u^-})-\mathcal L_0\cdot Z^{a, n;s, x}_u) \mathrm{d}u - Z^{a, n;s, x}_u\cdot \mathrm{d}M_u, ~ Y^{a, n;s, x}_{h} = g^a(X^{s,x}_h) \\
- \mathrm{d}Y^{b, n;s,x}_u &= (H^{b, n}(u, X^{s, x}_{u^-})-\mathcal L_0\cdot Z^{b, n;s, x}_u)\mathrm{d}u - Z^{b, n;s, x}_u\cdot \mathrm{d}M_u, ~ Y^{b, n;s, x}_{h} = g^b(X^{s,x}_h).
\end{array}\right.
$$

\subsubsection{Step 2: Convergence to the solution of a bang-bang system of BSDEs}\label{appendix_subsec:step_2}
From now, we consider any index $i$ equals to $a$ or $b$, we set $x\in \mathbb N^2$ and $s\in [0,h]$.\\

\textbf{Step 2a. Uniform estimates.}\vspace{0.5em}

In order to use dominated convergence we give some uniform a priori estimates for processes $(Y^{i, n;s, x}, Z^{i, n;s, x}).$\\

We first aim at using a comparison principle to control the upper bound of $Y^{i,n}$ and introduce the following BSDE

\begin{equation}\label{BSDE+}
\overline{Y}^{i, n;s,x}_u = g^i(X^{s, x}_h) + \int_u^h 4\lambda^{+}\| \overline{Z}^{i, n;s,x}_r\|_1 \mathrm{d}r - \int_u^h \overline{Z}^{i, n;s,x}_r \cdot \mathrm{d}M_r,\; s\leq u\leq h.
\end{equation}

Once again according to \cite{buckdahn1994bsde} there exists a unique solution $(\overline{Y}^{i, n;s,x}, \overline{Z}^{i, n;s, x})$ of the above BSDE in the space $\mathcal{S}^2_{s,h}(\mathbb{R})\times \mathcal{H}^2_{s,h}(\mathbb{R}^2)$ and there exists deterministic measurable functions $\overline{V}^{i, n}$ such that for any $u\in [s, h]$:
$$
\overline{Y}^{i, n;s,x}_u = \overline{V}^{i, n}(u, X_u^{ s,x}).
$$ 
 By comparison theorem for BSDE (see for instance\footnote{To be more accurate, we identify  our pair of processes as a compound Poisson process with jumps in $\{-1,1\}$, so that we are in the framework of \cite{royer2006backward} for a compensator $\lambda(dx)=\lambda_0(\delta_1(dx)+\delta_{-1}(dx)).$} Theorem 2.5 in \cite{royer2006backward}), for any time $s\leq u\leq h$ we get
\begin{equation}\label{comparisonBSDE}
Y^{i, n; s,x}_u\leq \overline{Y}_u^{i, n;s,x}, \mathbb{P}-a.s.
\end{equation}
We now give a uniform estimates of $\overline{Y}^{i,n;s,x}$ to get a uniform estimates for $Y^{i, n; s,x}$ in view of the previous relation. Consider the bi-dimensional process:
$$
M^{i, n}_{u}= M_u - 4\lambda_{+} \text{sign} (\overline{Z}^{i, n;s, x}_u),
$$
where the sign is taken coordinate by coordinate. The process $M^{i, n} = (M^{i, n}_1, M^{i, n}_2)$ is a bi-dimensional martingale under the probability $\mathbb{P}^{i, n}$ equivalent to $\mathbb{P}$ with density given by
$$
\mathcal E_h^{i, n} = \text{exp}\big( \int_0^h \log(\frac{\gamma^{i, n}_{t, 1}}{\lambda_0})\mathrm{d}N^{a}_t - (\gamma^{i, n}_{t, 1}- \lambda_0)\mathrm{d}t + \log(\frac{\gamma^{i, n}_{t, 2}}{\lambda_0})\mathrm{d}N^{b}_t - (\gamma^{i, n}_{t, 2} - \lambda_0)\mathrm{d}t  \big)
$$
with
$$
\gamma^{i, n}_{t, j} = \lambda_0 + 4\lambda_{+} \text{sign} (\overline{Z}^{i, n;t, x}_{j,t}).
$$
Consequently we get
$$
\overline{V}^{i, n}(s, x) = \mathbb{E}^{\mathbb{P}^{i, n}}[g_i(X_h^{s, x})].
$$
By polynomial growth of $g_i$ we deduce that there exists a positive constant $\tilde C$ such that
$$
|\overline{V}^{i, n}(s, x)| \leq \tilde  C \mathbb{E}^{\mathbb{P}^{i, n}}[\| X^{s, x}_h\|_2^2].
$$
Note that there exists a positive constant $\tilde \kappa$ such that

$$
\mathbb{E}^{\mathbb{P}^{i, n}}[\|X^{s, x}_h\|^2_2] \leq  \tilde \kappa(\|x\|_2^2 + 1).
$$
The previous equation implies the following polynomial growth bound
$$
|\overline{V}^{i, n}(s, x)|\leq C(1 + \|x\|^2_2),
$$
where $C:=\tilde C\tilde \kappa >0$.

According to the comparison result \eqref{comparisonBSDE} together with \eqref{feynmankac}, we deduce that there exists some positive constant $C$, which does not depend on $n$, such that

\[ 
V^{i,n}(s,x) \leq C(1 + |x_1|^{2} + |x_2|^{2}).
\]

Similarly, by considering a BSDE similar to \eqref{BSDE+} but with a minus sign in the generator, we get 
\[ 
V^{i,n}(s,x) \geq -C(1 + |x_1|^{2} + |x_2|^{2}).
\]

We thus deduce that for any $(s, x)\in [0,h]\times \mathbb{N}^2$ and $p\geq 1$ the following estimate holds for some positive constant $C_p$
\begin{equation}
\label{eq:proof_control_inequality_a}
\mathbb{E}[~\underset{s\leq u\leq h}{\sup}~|Y^{i, n;s,x}_u|^{p}]\leq C_{p}(1 + |x_1|^{2p} + |x_2|^{2p}).
\end{equation}

Moreover, the characterization \eqref{feynmankac} allows to transfer the prior estimates of $Y^{i, n;s, x}$ to $Z^{i, n;s, x}$. In particular we get that for any $p\geq 1$
\begin{equation}
\label{eq:proof_control_inequality_b}
\mathbb{E}[\underset{s\leq u \leq h}{\sup} |Z^{i, n;s, x}_u|^p ] \leq C_p (1 + |x_1|^{2p}+ |x_2|^{2p}).
\end{equation}
Note that the constant $C_p$ does not depend on $n$, so that Estimates \eqref{eq:proof_control_inequality_a} and \eqref{eq:proof_control_inequality_b} are uniform with respect to $n$.\\

\textbf{Step 2b. Convergence of the solutions of the smoothed BSDE.}\\

We now turn to the convergence of $(Y^{i, n;s, x}, Z^{i, n;s, x}),$ in $\mathcal{S}^2_{s,h}(\mathbb{R})\times \mathcal{H}^2_{s,h}(\mathbb{R}^2)$. For any $q\leq 2$, there exists a positive constant $\tilde C$ which does not depend on $n$ such that
\[
\mathbb{E}[\int_0^h|H^{i, n}(r, X_{r^-}^{0, 0})|^{q}\mathrm{d}r \leq \mathbb{E}[\int_0^h 2\lambda_{+}\|Z^{i, n;0, 0}_r\|_1^q \mathrm{d}r]\leq \tilde C.
\]
The sequence $(H^{i, n})_{n\geq 0}$ is bounded in $\mathbb{L}^2\big([0,h]\times\mathbb{N}^2, \mathrm{d}r \times \mu(0, 0;r,\mathrm{d}x) \big)$ where $\mu(0, 0;r,\mathrm{d}x)$ is the law of $X^{0, 0}_{r^-}$ under $\mathbb{P}$. Thus there exists a subsequence $(n_k)_{k\geq 0}$ such that $(H^{i, n_k})_{k\geq  0}$ converges weakly in $\mathbb{L}^2([0,h]\times\mathbb{R}, \mu(0, 0;r,\mathrm{d}x)\mathrm{d}r)$. We omit the index $k$ and still write $n$ instead of $n_k$ to reduce the notations.\\

We now prove that for any $(s, x)\in [0, h]\times \mathbb{N}^2$, $(V^{i, n}(s, x))_{n\geq0}$ is a Cauchy sequence. We set the function $\Delta^{i,n,m}(t,x,z_n,z_m):=H^{i, n}(t,x)- H^{i, m}(t,x)-\mathcal L_0\cdot(z_n-z_m)$ with $(n,m)\in \mathbb N$ and $(t,x,z_n,z_m)\in [0,T]\times \mathbb N^2\times \mathbb R^2\times \mathbb R^2$. Let $\delta\in [0,h-s]$ and $k\in \mathbb N$, we have
\begin{align}
\nonumber |V^{i, n}(s, x) - V^{i, m}(s, x)| &= |\mathbb{E}[\int_s^h {\Delta^{i,n,m}}(r, X^{s, x}_{r^-}, Z^{i, n;s, x}_r, Z^{i, m;s, x}_r)\mathrm{d}r]|\\
\label{VnVm} &\leq E_-^{s+\delta,h}+ E_+^{s+\delta,h} +E^{s,s+\delta}, \end{align}
with
\[  E_-^{s+\delta,h}:= |\mathbb{E}[\int_{s+\delta}^h \mathbf{1}_{\|X^{s, x}_{r^-}\|_{\infty}\leq k}{\Delta^{i,n,m}}(r, X^{s, x}_{r^-}, Z^{i, n;s, x}_r, Z^{i, m;s, x}_r)\mathrm{d}r]|,\]
\[E_+^{s+\delta,h}:=|\mathbb{E}[ \int_{s+ \delta}^h \mathbf{1}_{\|X^{s, x}_{r^-}\|_{\infty}> k}{\Delta^{i,n,m}}(r, X^{s, x}_{r^-}, Z^{i, n;s, x}_r, Z^{i, m;s, x}_r) \mathrm{d}r]|,\]
and \[E^{s,s+\delta}:= |\mathbb{E}[\int_{s}^{s+\delta} {\Delta^{i,n,m}}(r, X^{s, x}_{r^-}, Z^{i, n;s, x}_r, Z^{i, m;s, x}_r)\mathrm{d}r]|.\]
We obtain from \eqref{eq:proof_control_inequality_b} that there exists some constant $C$ independent of $n$ and $m$ such that
\begin{align*}
E^{s,s+\delta}\leq  C\delta.
\end{align*}

We now turn to $E^{s+\delta,h}_+$. By using Cauchy Schwarz and Markov inequalities together with the prior inequalities \eqref{eq:proof_control_inequality_a} and \eqref{eq:proof_control_inequality_b}, there exists a positive constant $\hat C$ again independent of $n$ and $m$ such that for any positive integer $k$
\begin{align*}
E_+^{s+\delta,h}& \leq |\mathbb{E}[ \int_{s+ \delta}^h \mathbf{1}_{\|X^{s, x}_{r^-}\|_{\infty}> k} \mathrm{d}r]|^{\frac{1}{2}} |\mathbb{E}[ \int_{s+ \delta}^h \Delta^{i,n,m}(r, X^{s, x}_{r^-}, Z^{i, n;s, x}_r, Z^{i, m;s, x}_r)^2 \mathrm{d}r]|^{\frac{1}{2}}\\
&\leq \frac{\hat C}{\sqrt{k}}.
\end{align*}
Finally, we note that
\begin{align*}
E_-^{s+\delta,h}&= \big|  \sum_{(p, q) \in \mathbb{N}^2} \int_{s}^h  \Delta^{i,n,m}(r,p,q, D V^{i, n}(t,p,q), D V^{i,m}(t,p,q)) \mathbb{P}\big(X^{t,(0, 0)}_{r} = (p,q) \big)  \phi_{s, x}(r, p, q) \mathrm{d}r \big| 
\end{align*}
with
$$
\phi_{s, x}(r, p, q) = \mathbf{1}_{p\leq k}\mathbf{1}_{q\leq k}\mathbf{1}_{r\geq s + \delta}  \frac{\mathbb{P}\big(X^{t,x}_{r} = (p,q) \big) }{\mathbb{P}\big(X^{t,(0, 0)}_{r} = (p,q) \big) }.
$$
Since 
$$
\mathbb{P}\big(X^{t,(0, 0)}_{r} = (p,q) \big)^{-1} = e^{2\lambda_0 r}\frac{p ! q!}{(\lambda_0r)^{p + q}}
$$
is bounded for $p$ and $q$ lower than $k$ and $r$ lower than $h$. The function $\phi_{s, x}$ is bounded and thus in $\mathbb{L}^{2}([0, h]\times  \mathbb{N}^2, \mu(0,0;s, \mathrm{d}x)\times \mathrm{d}s))$ consequently by weak convergence of $H^{i, n}$, we have that $E_-^{s+\delta,h}$ goes to $0$ when $m,n$ go to infinity. Hence, taking the limit when $\delta$ goes to $0$ and $k,n,m$ go to infinity, we deduce from \eqref{VnVm} that $(V^{i, n}(s, x))_{n\geq 0 }$ is a Cauchy sequence. We thus denote by $V^i(s,x)$ the limit of $(V^{i, n}(s, x))_{n\geq 0 }$. We recall that $V^i$ depends on the subsequence $(n_k)_{k\geq 0}$\\

We have the $\mathbb{P}$-almost sure convergence (up to the subsequence) of $Y_u^{i, n; s, x}$ since
$
Y_u^{i, n; s, x} = V^{i, n}(u, X^{s, x}_u).
$ We denote by $Y^{i;s, x}$ the almost sure limit of $Y^{i, n; s, x}$.
Notice moreover that in view of $\mathbf{(D)}$, we have\begin{equation}\label{cv:pointwise}
\lim\limits_{n\to+\infty} DV^{i,n}(s,x)=DV^i(s,x),\; (s,x)\in [0,h]\times \mathbb N^2.
\end{equation} 

By Equation \eqref{eq:proof_control_inequality_a} and Lebesgue dominated convergence theorem we have for any $\rho\geq 1$
\begin{equation}\label{conv:Y:Hrho}
\mathbb{E}[\int_{s}^h |Y_r^{i, n;s, x} - Y_r^{i;s, x}|^{\rho}\mathrm{d}r]\underset{n\rightarrow + \infty}{\rightarrow} 0.
\end{equation}
Let now $n,m$ be two positive integers. From Ito's formula applied to $(Y^{i, n;s, x} -Y^{i, m;s, x} )^2$ we get for any $s\leq u \leq h$ 

\begin{align}
&\nonumber |Y^{i, n;s, x}_u - Y^{i, m;s, x}_u|^2\\
&\nonumber = -\int_u^h |Z^{i, n;s, x}_{1,r} - Z^{i, m;s, x}_{1,r}|^2 \mathrm{d}(M_r^a + \lambda_0 r)-\int_u^h |Z^{i, n;s, x}_{2,r} - Z^{i, m;s, x}_{2,r}|^2 \mathrm{d}(M_r^b + \lambda_0 r)\\
\nonumber & \quad + 2\int_u^h(Y^{i, n;s, x}_r - Y^{i, m;s, x}_r)\big((H^{i, n} - H^{i, m})(r, X^{t, x}_{r^-})- \mathcal L_0\cdot (Z^{i, n;s, x}_r-Z^{i, m;s, x}_r)\big) \mathrm{d}r\\
\label{ineg:Yisn} &\quad -2 \int_u^h(Y^{i, n;s, x}_r - Y^{i, m;s, x}_r)(Z^{i, n;s, x}_r-Z^{i, m;s, x}_r)\cdot \mathrm{d}M_r.
\end{align}
Using Young's inequality and the definitions of $H^n$ and $H^m$ we deduce that there exists a positive constant $\tilde c$ (independent of $n$ and $m$) such that for any $\varepsilon>0$ 
\begin{align*}
&|Y^{i, n;s, x}_u - Y^{i, m;s, x}_u|^{2} + \int_u^h \lambda_0 \|Z^{i, n;s, x}_r - Z^{i, m;s, x}_r\|_2^2 \mathrm{d} r  \\
&\leq  \tilde c\varepsilon  |\lambda_+|^2 \int_u^h \big(\|Z^{i, n;s, x}_r\|_2^2 + \|Z^{i, m;s, x}_r\|_2^2 \big)\mathrm{d}r+ \frac{1}{\varepsilon} \int_u^h |Y^{i, n;s, x}_r - Y^{i, m;s, x}_r|^{2} \mathrm{d}r \\
&\quad  -2 \int_u^h(Y^{i, n;s, x}_r - Y^{i, m;s, x}_r)(Z^{i, n;s, x}_r-Z^{i, m;s, x}_r)\cdot \mathrm{d}M_r\\
&\quad -\int_u^h |Z^{i, n;s, x}_{1,r} - Z^{i, m;s, x}_{1,r}|^2 \mathrm{d}M_r^a -\int_u^h |Z^{i, n;s, x}_{2,r} - Z^{i, m;s, x}_{2,r}|^2 \mathrm{d}M_r^b.
\end{align*}
For $u = s$, by taking the expectation and by choosing $n,m$ large enough, we obtain from \eqref{eq:proof_control_inequality_b} and \eqref{cv:pointwise}, \eqref{conv:Y:Hrho} and the fact that $\varepsilon$ is arbitrary small that the following convergence holds 
\begin{equation}\label{conv:Z:cauchy}
\underset{n, m\rightarrow+\infty}{\lim \text{sup} }\mathbb{E}[\int_s^h \|Z^{i, n;s, x}_r - Z^{i, m;s, x}_r\|_2^{2} \mathrm{d}r]=0.
\end{equation}
Hence, $(Z^{i, n;s,x})_{n\in \mathbb{N}}$ is a Cauchy sequence (along the subsequence) and thus converges in $\mathcal{H}^2_{s,h}(\mathbb{R}^2)$ to some process $(Z_u^{i;s,x})_{s\leq u\leq h}$.\\

Similarly, by using \eqref{ineg:Yisn} and by noting that $-\int_u^h |Z^{i, n;s, x}_{1,r} - Z^{i, m;s, x}_{1,r}|^2 \mathrm{d}(M_r^a + \lambda_0 r)\leq 0$ and $-\int_u^h |Z^{i, n;s, x}_{2,r} - Z^{i, m;s, x}_{2,r}|^2 \mathrm{d}(M_r^b + \lambda_0 r)\leq 0$ since $M_\cdot^\alpha+\lambda_0\cdot =X_{\cdot}$ is a non decreasing process for $\alpha\in \{a,b\}$, we have
\begin{align*}
&\mathbb{E}[\underset{u\in[s, h]}{\sup}|Y^{i, n;s, x}_u - Y^{i, m;s, x}_u|^2 ]\\
 &\leq   \tilde c\varepsilon  |\lambda_+|^2 \mathbb E[\int_s^h \big(\|Z^{i, n;s, x}_r\|_2^2 + \|Z^{i, m;s, x}_r\|_2^2 \big)\mathrm{d}r]+ \frac{1}{\varepsilon} \mathbb E[\int_s^h |Y^{i, n;s, x}_r - Y^{i, m;s, x}_r|^{2} \mathrm{d}r]\\
 &+2 \mathbb E[\int_0^h|Y^{i, n;s, x}_r - Y^{i, m;s, x}_r||Z^{i, n;s, x}_{1,r}-Z^{i, m;s, x}_{1,r}| (dN_r^a+\lambda_0dr)]\\
 &+2 \mathbb E[\int_0^h|Y^{i, n;s, x}_r - Y^{i, m;s, x}_r||Z^{i, n;s, x}_{2,r}-Z^{i, m;s, x}_{2,r}| (dN_r^b+\lambda_0dr)]\\
 &\leq \tilde c\varepsilon  |\lambda_+|^2 \mathbb E[\int_s^h \big(\|Z^{i, n;s, x}_r\|_2^2 + \|Z^{i, m;s, x}_r\|_2^2 \big)\mathrm{d}r]+ \frac{1}{\varepsilon} \mathbb E[\int_s^h |Y^{i, n;s, x}_r - Y^{i, m;s, x}_r|^{2} \mathrm{d}r]\\
 &+2 \mathbb E[\int_0^h|Y^{i, n;s, x}_r - Y^{i, m;s, x}_r||Z^{i, n;s, x}_{1,r}-Z^{i, m;s, x}_{1,r}| (dM_r^a+2\lambda_0dr)]\\
 &+2 \mathbb E[\int_0^h|Y^{i, n;s, x}_r - Y^{i, m;s, x}_r||Z^{i, n;s, x}_{2,r}-Z^{i, m;s, x}_{2,r}| (dM_r^b+2\lambda_0dr)]\\
  &\leq \tilde c\varepsilon  |\lambda_+|^2 \mathbb E[\int_s^h \big(\|Z^{i, n;s, x}_r\|_2^2 + \|Z^{i, m;s, x}_r\|_2^2 \big)\mathrm{d}r]+ \frac{1}{\varepsilon} \mathbb E[\int_s^h |Y^{i, n;s, x}_r - Y^{i, m;s, x}_r|^{2} \mathrm{d}r]\\
 &+4\lambda_0 \mathbb E[\int_0^h|Y^{i, n;s, x}_r - Y^{i, m;s, x}_r|\|Z^{i, n;s, x}_{r}-Z^{i, m;s, x}_{r}\|_1 dr].
\end{align*}
By using again Young inequality for the last term in the previous inequality with the same $\varepsilon$, we deduce that there exists a positive constant $c>0$ independent of $n,m$ and $\varepsilon$ such that 
\begin{align*}
&\mathbb{E}[\underset{u\in[s, h]}{\sup}|Y^{i, n;s, x}_u - Y^{i, m;s, x}_u|^2 ]\\
&\leq c\big(\varepsilon  |\lambda_+|^2 \mathbb E[\int_s^h \big(\|Z^{i, n;s, x}_r\|_2^2 + \|Z^{i, m;s, x}_r\|_2^2 \big)\mathrm{d}r]+ \frac{1}{\varepsilon} \mathbb E[\int_s^h |Y^{i, n;s, x}_r - Y^{i, m;s, x}_r|^{2} \mathrm{d}r]\big).
\end{align*}
Since $\varepsilon$ is arbitrary and because of Equations \eqref{eq:proof_control_inequality_a}, \eqref{eq:proof_control_inequality_b} and \eqref{conv:Y:Hrho} we deduce
$$
\underset{n, m\rightarrow+\infty}{\lim \text{sup} }\mathbb{E}[\underset{u\in[s, t]}{\sup}|Y^{i, n;s, x}_u - Y^{i, m;s, x}_u|^2] =0.
$$
So we have the convergence of $(Y^{i, n;s, x})_{n\geq 0}$ in $\mathcal{S}^2_{s,h}(\mathbb{R})$ towards a process $(Y_u^{i;s,x})_{s\leq u\leq h}$ up to a subsequence.\\

\textbf{Step 2c. Convergence of the generator}\\

We study the convergence of $(H^{i, n})_{n\geq 0 }$, for $i \in \{ a, b\}$ (still along the subsequence introduced in Step 2b.). We focus on $(H^{a, n})_{n\geq 0 }$, the proof is identical for $(H^{b, n})_{n\geq 0 }$. Recall that
$$
H^{a, n}(u, X^{s, x}_{u^-}) = \Phi_n(Z^{a, n;s, x}_{1,u}\lambda^{\star}_a(Z^{a, n;s, x}_{1,u})) + \Phi_n(Z^{a, n;s, x}_{2,u})\lambda^{n}(Z^{b, n;s, x}_{2,u}).
$$
First note that
$$
\Phi_n(Z^{a, n;s, x}_{ 1, u}\lambda^{\star}_a(Z^{a, n;s, x}_{1,u})) \underset{n \rightarrow  +\infty}{\rightarrow } Z^{a;s, x}_{1,u}\lambda^{ \star}_a(Z^{a;s, x}_{1,u})
$$
with convergence taking place $\mathbb{P}$-a.s. and in $\mathcal{H}_{s,u}^2(\mathbb{R}^2)$ by dominated convergence and uniform integrability of $(\|Z^{a, n;s, x} \|_2^2)_{n\geq 0}$. We split the remaining part in a continuous and a non continuous parts
$$
\Phi_n(Z^{a, n;s, x}_{2,u})\lambda^{n}(Z^{b, n;s, x}_{2,u}) = \Phi_n(Z^{a, n;s, x}_{2,u})\lambda^{n}(Z^{b, n;s, x}_{2,u})\mathbf{1}_{Z^{b;s, x}_{u}\neq 0} + \Phi_n(Z^{a, n;s, x}_{2,u})\lambda^{n}(Z^{b,n;s, x}_{2,u})\mathbf{1}_{Z^{b;s, x}_{2,u} =  0}.
$$
We have the convergence of 
$\Phi_n(Z^{a, n;s, x}_{2,u})\lambda^{n}(Z^{b, n;s, x}_{2,u})\mathbf{1}_{Z^{b;s, x}_{2,u}\neq 0},\; \mathrm{d}s\times\mathrm{d}\mathbb P~a.e$
and the convergence also holds in $\mathcal{H}_{s,h}^2(\mathbb{R}^2)$. Moreover, $(\lambda^n(Z^{b, n;s, x}_{2,u})\mathbf{1}_{ Z^{b;s, x}_{u} = 0} )_{n\geq 0}$ being bounded we denote by $\vartheta$ a weak limit in $\mathcal{H}^2_{s,h}(\mathbb{R}^2)$.\\

Now we show that for any stopping time $\tau \in [s, h]$ we have in the sense of weak convergence in $\mathbb{L}^2(\mathbb R)$:
\begin{equation}\label{eq:convergence:phin:weak}
\int_s^{\tau }\Phi_n(Z^{a, n;s, x}_{2,u}) \lambda^{n}(Z^{b,n;s, x}_{2,u})\mathbf{1}_{Z^{b;s, x}_{2,u} =  0}\mathrm{d}u \underset{n \rightarrow  +\infty}{\rightarrow }\int_s^{\tau} Z^{a;s, x}_{2,u} \vartheta_u \mathbf{1}_{Z^{b;s, x}_{2,u}}  \mathrm{d}u.
\end{equation}
We have
\begin{eqnarray*}
\int_s^{\tau }\Phi_n(Z^{a, n;s, x}_{2,u}) \lambda^{n}(Z^{b,n;s, x}_{2,u})\mathbf{1}_{Z^{b;s, x}_{2,u} =  0}\mathrm{d}u &=& \int_s^{\tau }(\Phi_n(Z^{a, n;s, x}_{2,u}) - Z^{a;s, x}_{2,u}) \lambda^{n}(Z^{b,n;s, x}_{2,u}))\mathbf{1}_{Z^{b,s, x}_{2,u} =  0}\mathrm{d}u \\
&~&+ \int_s^{\tau } Z^{a;s, x}_{2,u} \lambda^{n}(Z^{b,n;s, x}_{2,u})\mathbf{1}_{Z^{b;s, x}_{2,u} =  0}\mathrm{d}s.
\end{eqnarray*}
The first term in the previous equality converges to $0$ in $\mathbb{L}^2(\mathbb R)$ by dominated convergence therefore it converges weakly. Now we show that the second term converges weakly. We prove that for any random variable $\xi \in \mathbb{L}^2(\mathbb R)$ and $\mathcal F_h$-measurable the following convergence holds
\begin{equation}\label{convergence:xi:weak}
\mathbb{E}[\xi \int_s^{\tau } Z^{a;s, x}_{2,u} \lambda^{n}(Z^{b,n;s, x}_{2,u})\mathbf{1}_{Z^{b;s, x}_{2,u} =  0}\mathrm{d}u] \underset{n \rightarrow  +\infty}{\rightarrow } \mathbb{E}[\xi \int_s^{\tau } Z^{a;s, x}_{2,u} \vartheta_u \mathbf{1}_{Z^{b;s, x}_{2,u} =0} \mathrm{d}u].
\end{equation}
Using a martingale decomposition result for martingales associated to jump processes, see \cite{davis1976representation}, to the conditional expectation of $\xi$ with respect to the filtration $\mathcal F$ we have
$$
\mathbb E[\xi|\mathcal F_\tau] = \mathbb{E}[\xi] + \int_s^{\tau} \Lambda_u \cdot \mathrm{d}M_u
$$
for some $\Lambda \in \mathcal{H}^2_{s,h}(\mathbb{R}^2)$. Consequently
\begin{eqnarray*}
\mathbb{E}[\xi \int_s^{\tau}Z^{a;s, x}_{2,u} \lambda^{n}(Z^{b,n;s, x}_{2,u})\mathbf{1}_{Z^{b;s, x}_{2,u} =  0}\mathrm{d}u]&=& \mathbb{E}[\int_s^{\tau} \Lambda_u \cdot \mathrm{d}M_u \int_s^{\tau}Z^{a;s, x}_{2,u} \lambda^{n}(Z^{b,n;s, x}_{2,u})\mathbf{1}_{Z^{b;s, x}_{2,u} =  0}\mathrm{d}u]\\
&~& + \mathbb{E}[\xi]\mathbb{E}[ \int_s^{\tau}Z^{a;s, x}_{2,u} \lambda^{n}(Z^{b,n;s, x}_{2,u})\mathbf{1}_{Z^{b;s, x}_{2,u} =  0}\mathrm{d}u].
\end{eqnarray*}
Notice moreover that
$$
\mathbb{E}[\xi]\mathbb{E}[ \int_s^{\tau}Z^{a;s, x}_{2,u} \lambda^{n}(Z^{b,n;s, x}_{2,u})\mathbf{1}_{Z^{b;s, x}_{2,u} =  0}\mathrm{d}u] \underset{n \rightarrow + \infty}{\rightarrow} \mathbb{E}[\xi]\mathbb{E}[ \int_s^{\tau}Z^{a;s, x}_{2,u} \vartheta_u \mathbf{1}_{Z^{b;s, x}_{2,u} =0}\mathrm{d}u]
$$
since $\lambda^{n}(Z^{b,n;s, x}_{2,u})\mathbf{1}_{Z^{b;s, x}_{2,u} =  0}$ converges to $\vartheta_u \mathbf{1}_{Z^{b;s, x}_{2,u} =0}$ and since $Z^{a;s, x} \in \mathcal{H}^2_{s,h}(\mathbb{R}^2)$. Using Ito's formula, we get 
\begin{align*}
&\mathbb{E}[\int_s^\tau \Lambda_u \cdot \mathrm{d}M_u \int_s^{\tau}Z^{a;s, x}_{2,u} \lambda^{n}(Z^{b,n;s, x}_{2,u})\mathbf{1}_{Z^{b;s, x}_{2,u} =  0}\mathrm{d}u] \\
&= \mathbb{E}\Big[\int_s^{\tau}\big( \int_s^u Z^{a;s, x}_{r, 2} \lambda^{n}(Z^{b,n;s, x}_{r, 2})\mathbf{1}_{Z^{b;s, x}_{r, 2} =  0}\mathrm{d}r\big) \Lambda_u\cdot \mathrm{d}M_u\Big]\\
& + \mathbb{E}\Big[\int_s^{\tau} \int_s^u  \Lambda_r\cdot \mathrm{d}M_r Z^{a;s, x}_{2,u} \lambda^{n}(Z^{b,n;s, x}_{2,u})\mathbf{1}_{Z^{b;s, x}_{2,u} =  0}\mathrm{d}u\Big].
\end{align*}
The first term is equal to zero. Concerning the second term, we set $\psi_r = \int_s^r \Lambda_u\cdot \mathrm{d}M_u$. Hence, for any $\kappa\geq 0$
\begin{align*}
&\mathbb{E}[\int_s^{\tau} \psi_u Z^{a;s, x}_{2,u} (\lambda^{n}(Z^{b,n;s, x}_{2,u})\mathbf{1}_{Z^{b;s, x}_{2,u} =  0} - \vartheta_u )\mathrm{d}u]\\
 &= \mathbb{E}[\int_s^{\tau} \psi_u Z^{a;s, x}_{2,u} \mathbf{1}_{|\psi_u Z^{a;s, x}_{2,u}|< \kappa }(\lambda^{n}(Z^{b,n;s, x}_{2,u}) - \vartheta_u )\mathbf{1}_{Z^{b;s, x}_{2,u} =  0}\mathrm{d}u] \\
 &+\mathbb{E}[\int_s^{\tau} \psi_u Z^{a;s, x}_{2,u}\mathbf{1}_{|\psi_u Z^{a;s, x}_{2,u}|\geq \kappa } (\lambda^{n}(Z^{b,n;s, x}_{2,u})\ - \vartheta_u) \mathbf{1}_{Z^{b;s, x}_{2,u} =  0}\mathrm{d}u] .
\end{align*}
The first term in the previous expression converges to $0$ since $\lambda^{n}(Z^{b,n;s, x}_{2,\cdot})\mathbf{1}_{Z^{b;s, x}_{2,\cdot} =  0}$ converges weakly towards $\vartheta $. The second one goes to zero when $\kappa$ goes to infinity as $\psi \|Z^{a;s, x}\|_2$ is in $\mathcal{H}^2_{s,h}(\mathbb{R})$. We have proved the convergence \eqref{convergence:xi:weak}. Hence, the convergence \eqref{eq:convergence:phin:weak} holds weakly in $\mathbb{L}^2(\mathbb R)$.\\

We deduce that $
\int_s^{\tau}H^{a, n}(u, X^{s,x}_u) \mathrm{d}u$ converges weakly to $\int_0^{\tau}H^{a, \star}(Z^{a;s, x}_u, Z^{b;s, x}_u, \vartheta_u) \mathrm{d}u$ in  $\mathbb{L}^2(\mathbb R)$ along the subsequence $(n_k)_{k\geq 0}$.\\

\textbf{Step 2d. Convergence to the solution of a bang-bang BSDE}\\

\noindent If we write the first BSDE in the system $\mathbf{(J^n)}$ in a forward way, we get
\begin{equation*}
Y^{a, n;s,x}_{\tau} = Y^{a, n;s,x}_s - \int_s^{\tau}H^{a, n}(u, X^{s,x}_{u^-}) \mathrm{d}u + \int_s^{\tau}Z^{a, n;s, x}_u\mathrm{d}M_u.
\end{equation*}
We recall that we write $n$ instead of $n_k$ so that all the convergence that we obtain has to be understood up to a subsequence. Thus, from the almost sure and $\mathcal{S}^2_{s,h}(\mathbb{R})$ convergence of $(Y^{a, n;s,x})_{n \geq 0}$ to $Y^{a;s,x}$ together with
$$
\int_s^{\tau} Z^{a, n;s, x}_u\cdot\mathrm{d}M_u \underset{n \rightarrow + \infty}{\rightarrow}\int_s^{\tau} Z^{a;s, x}_u\cdot\mathrm{d}M_u
,~\text{in }\mathbb{L}^2(\mathbb R),
$$
and the convergence of the generator $H^{a,n}$ proved in Step 2c, we deduce that
$$
Y^{a;s,x}_{\tau} = Y^{a;s,x}_s - \int_s^{\tau}H^{a, \star}(Z^{a;s, x}_u, Z^{b;s, x}_u, \vartheta_u) \mathrm{d}u + \int_s^{\tau}Z^{a;s, x}_u\mathrm{d}M_u,\; \mathbb{P}-a.s.
$$
This result being true for any stopping time $\tau \in [s, h],$ the processes on both sides are indistinguishable and we have
$$
\mathbb{P}-a.s.~Y^{a;s,x}_{u} = Y^{a;s,x}_s - \int_s^{u}H^{a, \star}(Z^{a;s, x}_r, Z^{b;s, x}_r, \vartheta_r) \mathrm{d}r + \int_s^{u}Z^{a;s, x}_r\mathrm{d}M_r,\; \forall u \in [s, h].
$$
Finally we have
$$
\mathbb{P}-a.s.~Y^{a;s,x}_{u} = g^a(X^{s, x}_h) + \int_u^{h}H^{a, \star}(Z^{a;s, x}_r, Z^{b;s, x}_r, \vartheta_r) \mathrm{d}r - \int_u^{h}Z^{a;s, x}_r\mathrm{d}M_r, ~ \forall u \in [s, h].
$$
with $Y^{a;s, x}\in \mathcal{S}^{2}_{s,h}(\mathbb{R})$ and $Z^{a;s, x} \in \mathcal{H}^2_{s,h}(\mathbb{R}^2) $. We have the same result by considering the index $b$ and by denoting $\theta_u$ the almost sure limit of $(\lambda^n(Z^{a, n;s, x}_{u})\mathbf{1}_{ Z^{a, n;s, x}_{u} = 0} )_{n\geq 0}$ which holds also in $\mathcal{H}^2_{s,h} $ by the dominated convergence theorem. \\

\textbf{Step 3: Nash equilibrium and conclusion.}\\

We have seen in the previous step that we can build $\vartheta$ and $\theta$, which are functions of $(u,N_u^a,N_u^b)$ ensuring the existence of a solution a solution $(Y^a,Y^b,Z^a,Z^b)\in (\mathcal{S}^{2}_{s,h}(\mathbb{R}))^2\times( \mathcal{H}^2_{s,h}(\mathbb{R}^2))^2$ to the following coupled BSDE (by taking $s=0$), \begin{equation}
\label{eq:bsde}
\left\{\begin{array}{ll}
- \mathrm{d}Y^a_u =& H^{a,\star}( Z^a_u, Z^b_u, \vartheta_u) - Z^a_u\cdot \mathrm{d}M_u,~Y^a_h = g^a(X^{0, 0}_t)\\
- \mathrm{d}Y^b_u =& H^{b,\star}( Z^a_u, Z^b_u, \theta_u) - Z^b_u\cdot \mathrm{d}M_u,~Y^b_h = g^b(X^{0, 0}_t).
\end{array}\right.
\end{equation}
We could rely this BSDE to the system $\mathbf{(S)}$ and use Proposition \ref{th:nash_equilibrium_bangbang}. However, we are not able to prove the continuous differentiability of the functions $V^i$ with respect to the time variable. It is why we use the theory of BSDEs similarly to \cite{hamadene2014bang} with the proposition below to conclude.
	
	\begin{proposition}[Extension of Theorems 2.5 and 2.6 in \cite{hamadene2014bang}]
\label{th:nash_equilibrium}
There exist a pair of deterministic functions $V^a, V^b$ and some adapted processes $\vartheta$ and $\theta$ with values in $[\lambda_-,\lambda_+]$ such that 
\begin{enumerate}
\item[$\bullet$] BSDE \eqref{eq:bsde} admits a solution $(Y^a, Y^b,Z^a,Z^b)\in(\mathcal{S}^{2}_{h}(\mathbb{R}))^2\times (\mathcal{H}^2_{h}(\mathbb{R}^2))^2$, 

\item[$\bullet$]  $V^a$ and $V^b$ are two deterministic measurable functions with polynomial growth from $[0, h]\times \mathbb{R}^2$ to $\mathbb{R}$ such that $\mathbb{P}-$as, $\forall u \leq h,~Y^a_u =V^a(u, X_u)$ and $Y^b_u =V^b(u, X_u)$.
\item[$\bullet$]  The pair of controls $(\lambda^{\star}_a(Z^a_u, \theta_u),\lambda^{ \star}_b(Z^b_u, \vartheta_u))_{u\leq t} $ defined by $(\mathbf{L})$ where $\vartheta$ and $\theta$ are obtained as an almost sure (up to a subsequence) and $\mathcal H^2_h(\mathbb R^2)$ limits of $\lambda^n(Z^{b,n}_u)\mathbf 1_{Z_u^{b}=0}$ and $\lambda^n(Z^{a,n}_u)\mathbf 1_{Z_u^{a}=0}$ respectively is a bang-bang type Nash equilibrium point of the non zero-sum stochastic differential game \eqref{eq:market_taker_game}. 
\end{enumerate}
\end{proposition}
\begin{proof}
Properties $1.$ and $2.$ are direct consequences of the proof made in Step 2. Property 3. is obtained by adapting the proof of Proposition 2.4 in \cite{hamadene2014bang} to the jump case, with minimizations instead of maximizations.
\end{proof}
Hence, Step 1 provides that the system $\mathbf{(S^n)}$ admits a unique viscosity solution given by the unique solution of $\mathbf{(\widetilde{J^{n}})}$ which approaches the solution of \eqref{eq:bsde} so that $\lambda^n(Z^{b,n}_u)\mathbf 1_{Z_u^{b}=0}$ and $\lambda^n(Z^{a,n}_u)\mathbf 1_{Z_u^{a}=0}$ converge almost surely up to a subsequence (and in fact in $\mathcal H^2_h(\mathbb R^2)$) to a Nash equilibrium for the game \eqref{eq:market_taker_game} by using Proposition \ref{th:nash_equilibrium}. This concludes the proof of Theorem \ref{thm:edpapproach}.

\subsection{Proof of Corollary \ref{cor:Eh} and numerical method}
\label{appendix:numerical_nash}
In Theorem \ref{thm:edpapproach} we only get convergence results up to a subsequence. However numerically we observe that the sequence $(V^{i, n})_{n\geq 0}$ converges for $i=a$ or $b$. Therefore to approach the solution of the system $\mathbf{(S)}$ we solve the approached system $\mathbf{(S^n)}$ for $n$ large. To implement the numerical method we need to bound the domain. In practice this means that there is only a limited number of orders in auctions. Thus we consider the new system 
$$
\mathbf{(S_Q^{n})}\left\{
\begin{array}{ll}
\partial_s V^{a,n}+H^{a,n}(D^Q_aV^{a,n},D^Q_bV^{a,n}, D^Q_bV^{b,n})=0,\; s\in [0,h),\, (\alpha,\beta)\in \{0, \dots, Q\}^2,&\\
V^{a,n}(h,\alpha, \beta)=g^a(\alpha, \beta),&\\
\partial_s V^{b,n}+H^{b,n}(D^Q_bV^{b,n},D^Q_aV^{b,n}, D^Q_aV^{a,n}))=0,\,  s\in [0,h),\, (\alpha,\beta)\in \{0, \dots, Q\}^2,&\\
V^{b,n}(h,\alpha,\beta)=g^b(\alpha, \beta),&
\end{array}
\right.
$$
on the domain $[0, h] \times \{0, \dots, Q\}^2$. The operators $(D_a^Q,D_b^Q)$ are defined similarly to $(D_a,D_b)$ with the following boundary conditions
$$
D^Q_aV(s, Q, m) = 0 \text{ and }D^Q_bV(s, n, Q) = 0\text{ for any }(s,n,m)\in [0,h]\times  \{0, \dots, Q\}^2.
$$
Interpreting $\mathbf{(S_b^n)}$ as an ordinary differential equation in $\mathbb{R}^{(Q+1)^2}$ according to Cauchy-Lipschitz Theorem we have existence of a solution $(V^{a, n}_Q, V^{b, n}_Q)$ for the system $\mathbf{(S_Q^n)}$ which is unique.\\

Remember that in our model the auction starts at time $\tau = \inf\{s>0\text{ s.t. } N^a_s + N^b_s>0 \}$. Consequently market takers optimize their behavior by controlling the processes $(N^a_{\tau + \cdot}, N^b_{\tau + \cdot} )$. Now remark that
	$$
	I_{\tau + h}^2 = N^a_{\tau + h}(N^a_{\tau + h} - N^b_{\tau + h}) + N^b_{\tau + h}(N^b_{\tau + h} - N^a_{\tau + h}).
	$$
	Consequently, the symmetry of the problem with respect to $a$ and $b$ leads to
	$$
	\mathbb{E}[I_{\tau + h}^2] =  \mathbb{P}(N^a_{\tau} = 1)\big( V^a(0, 1, 0) +V^b(0, 1, 0) \big) + \mathbb{P}(N^b_{\tau} = 1)\big( V^a(0, 0,1)  +V^b(0, 0, 1) \big).
	$$
	Now we assume that market takers controls their intensities using a pair of Nash Equilibrium controls $(\lambda^{\star}_a, \lambda^{\star}_b)$ obtained in Theorem \ref{thm:edpapproach} as limit of the smoothed problem.  According to the first point of Theorem \ref{thm:edpapproach} and since $V^a(0,0,1)=V^b(0,1,0)$ and $V^b(0,0,1)=V^a(0,1,0)$, we get Corollary \ref{cor:Eh} so that
	$$
	\mathbb{E}[I_h^2] = \underset{n\rightarrow + \infty}{\lim } V^{a, n}(0, 1, 0)+V^{b, n}(0, 1, 0)=V^{a}(0, 1, 0)+V^{b}(0, 1, 0).
	$$
	Let $\bar{V}^{a, n}$ (resp. $\bar{V}^{b, n}$) be defined as the backward form of the solutions $V^{a, n}$ (resp. $V^{b, n}$) of $\mathbf{(S^n)}$, more precisely
	$$
	\bar{V}^{i, n}(s, \cdot, \cdot) = V^{i, n}(h-s, \cdot, \cdot),\; s\in [0,h], \text{ for }i \in \{a, b\}.
	$$
	In the same way, we denote by $(\bar{V}_Q^{a, n}, \bar{V}_Q^{b, n})$ the backward versions of the solution $(V_Q^{a, n}, V_Q^{b, n})$ of $\mathbf{(S_Q^n)}$. The functions $(\bar{V}_Q^{a, n}, \bar{V}_Q^{b, n})$ are computed by solving the backward system $\mathbf{(S_Q^n)}$.\\
	
		Finally note that 
		$$
		\mathbb{E}[I_h^2] = \underset{n\rightarrow + \infty}{\lim }  \bar{V}^{a, n}(h, 1, 0)+ \bar{V}^{b, n}(h, 1, 0)\approx \bar{V}_Q^{a, n}(h, 1, 0)+ \bar{V}_Q^{b, n}(h, 1, 0).
		$$
		Hence we use the quantity $\bar{V}_Q^{a, n}(h, 1, 0)+\bar{V}_Q^{b, n}(h, 1, 0)$ for $n=1 000$ and $Q=100$ to approach more accurately $\mathbb{E}[I_{h}^2]$.

\section{Model extension: Market makers can cancel their limit orders }
\label{appendix:cancellation}
We can extend our model and allow market makers to revise their position before the auction clearing by cancelling their limit orders. Formally a market maker arrived at time $\tau \leq \tau^{cl}_i$ will maintain its position until the auction clearing at time $t$ with a probability $\theta(t-\tau^{cl}_i)$, where $\theta$ is a $[0,1]$-valued decreasing function such that $\theta(0)=1$. Hence, the number of market makers present at the $i-th$ auction clearing is 
$$
\tilde{N}_{\tau^{cl}_{i - 1}} - \tilde{N}_{\tau^{cl}_i} ~, ~~\text{with} ~~\tilde{N}_s = \sum_{j = N^{mm}_{\tau^{cl}_{i - 1}} + 1 }^{N^{mm}_{\tau^{cl}_i}} \mathbf{1}_{X_k \leq \theta(\tau_k-\tau^{cl}_i)},
$$
where $(X_j)_{j\geq 0}$ is a sequence of i.i.d. random variables with uniform law on $[0, 1]$. We can show that during auction time $(\overline{N}_s)_{0\leq s \leq h} = (\tilde{N}_{\tau^{op}_i + s})_{s\geq 0}$ has the same law than an inhomogeneous Poisson process with intensity 
$$
\lambda(s) = \mu \theta(t - s).
$$
Moreover we still have an explicit formula for $E$. 
$$
E^{mid}(h) = (1 - e^{-m_h}\frac{\nu}{\nu + \mu})^{-1} e^{\nu h }\int_{h}^{+\infty} \nu e^{-\nu t } \Big( (\sigma^2_{f}\frac{t}{6}+\sigma^2)e^{-m_t}\int_{0}^{m_t}\frac{e^s - 1}{s}\mathrm{d}s +\sigma_f^2 \frac{t}{3}(1 - e^{-m_t})  \Big)\mathrm{d}t
$$
and
$$
E(h) = E^{mid}(h) + \frac{\mathbb{E}[I^2_{\tau^{op}_1 + h}]}{K^2} (1 - e^{-m_h}\frac{\nu}{\nu + \mu})^{-1} e^{\nu h } \int_h^{+\infty}\nu e^{-m_t }e^{-m_t}\int_0^{m_t} \frac{1}{s}\int_0^s \frac{e^u - 1}{u} \mathrm{d}u \mathrm{d}s \mathrm{d}t
$$
with 
$$
m_t = \int_0^{t}\mu  \theta(s) \mathrm{d}s.
$$

\section{Proof of Lemma \ref{lemma:utility}}
\label{appendix:regenerative}

Consider for any $s>\tau^{cl}_1$, $X_s = (\overline{P}^{cl}_s-\overline{P}_s)^2$. We show that $(X_s)_{s > \tau^{cl}_1}$ is a regenerative process with renewal times given by $(\tau^{cl}_i)_{i\geq 1}$.\\

Consider  $\tau^{cl}_{i} \leq s < \tau^{cl}_{i+1}$ we have
\begin{equation}
\overline{P}^{cl}_s-\overline{P}_s = \frac{1}{N^{i,mm}_{\Delta_i} }\sum_{k = 1}^{N^{i,mm}_{\Delta_i} } (P_{\tau_i^{cl}} - P_{\tau^{cl}_{i-1}+\tau^{i, mm}_k}) + \frac{1}{N^{i,mm}_{\Delta_i} } \sum_{k = 1}^{ N^{i,mm}_{\Delta_i} } g_k + \frac{I^i_{ \Delta_i} }{K N^{i,mm}_{\Delta_i}}.
\label{eq:proof_regen_a}
\end{equation}
According to Assumption \ref{assumption:order_flow} the process $ (N^{i,mm}_{t}, I^i_t)_{t\geq 0}$
is independent from $\mathcal{F}_{\tau_{i-1}^{cl}}$ with same law as $(N^{mm}_{t}, I_t)_{t\geq 0}$. Same results holds for $(P_{\tau_{i-1}^{cl}+t} - P_{\tau_{i-1}^{cl}})_{t\geq 0}$ and $(P_{t}-P_0)_{t\geq 0}$ since $P$ is a Brownian motion. Consequently $N^{i, mm}_{\Delta_{i}}$, $I^i_{\Delta_{i}}$ and $(P_{\tau_{i-1}^{cl}+t} - P_{\tau_{i-1}^{cl}})_{t\geq 0}$ are independent from $(X_s)_{s<\tau_i^{cl}}$ with same law as $N^{mm}_{\tau_{1}^{cl}}$, $I_{\tau^{cl}_1}$ and $(P_{t}-P_0)_{t \geq 0}$.\\

Thus according to \eqref{eq:proof_regen_a} and since $X$ is piecewise continuous with jump at times $(\tau^{cl}_i)_{i\geq  1 }$, for any $ \tau_{i}^{cl} \leq  s < \tau_{i+1}^{cl} $, $X_s$ is independent of $(X_s)_{ s < \tau^{cl}_i}$ and has the same distribution than $X_{\tau^{cl}_1}$. Thus $X$ is regenerative with renewal times equal to $(\tau^{cl}_i)_{i \geq >1}$\\

Thus according to Theorem 3.1 Chap VI in \cite{asmussen2008applied} we have the almost sure convergence
\begin{eqnarray*}
\frac{\int_{0}^{t}X_s\mathrm{d}s}{t}\underset{t\rightarrow +\infty}{\rightarrow}& &\frac{\mathbb{E}[\int_{\tau^{cl}_{1}}^{\tau^{cl}_{2}}X_s\mathrm{d}s]}{\mathbb{E}[\tau^{cl}_{2} - \tau^{cl}_{1}]}\\
&=& \frac{\mathbb{E}[\tau^{cl}_{2} - \tau^{cl}_{1}] \mathbb{E}[X_{\tau_1^{cl}}]}{\mathbb{E}[\tau^{cl}_{2} - \tau^{cl}_{1}]}\\
&=& \mathbb{E}[X_{\tau_1^{cl}}] = \mathbb{E}[(P_{\tau_1^{cl}} - P^{cl}_{\tau_1^{cl}})^2  ].
\end{eqnarray*}
Thus we get the stated result.

\section*{Acknowledgments}
The authors gratefully acknowledge the financial supports of the ERC Grant 679836 Staqamof, the Chaires Analytics and Models for Regulation and Financial Risk. Thibaut Mastrolia acknowledges the financial support of the ANR project PACMAN.

\bibliographystyle{apalike}

\end{document}